\theoremstyle{plain}
\newtheorem{observation}[theorem]{Observation}
\newtheorem{lemm}[theorem]{Lemma}
\newcommand{\cK}[0]{\ensuremath{\mathcal{K}}}
\newcommand{\cFPT}[0]{\ensuremath{\mathsf{FPT}}}
\newcommand{\bigO}[1]{\ensuremath{\mathcal{O}(#1)}}
\newcommand{\cNP}[0]{\ensuremath{\mathsf{NP}}}
\newcommand{\cL}[0]{\ensuremath{\mathcal{L}}}
\newcommand{\bbR}[0]{\ensuremath{\mathbb{R}}}
\newcommand{\eps}[0]{\ensuremath{\varepsilon}}	
\newcommand{\without}[0]{\ensuremath{\backslash}}	
\newcommand{\wo}[0]{\ensuremath{\backslash}}	
\newcommand{\tp}[0]{\ensuremath{\top}}			
\newcommand{\midresize}[0]{\ensuremath{\, \middle| \,}}
\def\model{\@ifnextchar[{\@modelWithTwo}{\@modelWithOne}}
\def\@modelWithTwo[#1]#2{
{(\text{#2}_{\text{#1}})}}
\def\@modelWithOne#1{{(\text{#1})}}
\def\polytope{\@ifnextchar[{\@polytopeWithTwo}{\@polytopeWithOne}}
\def\@polytopeWithTwo[#1]#2{\mathcal{P}^{\text{#2}}_{\text{#1}}}
\def\@polytopeWithOne#1{{\mathcal{P}_{\text{#1}}}}
\newcommand{\projection}[2]{\text{Proj}_{#1}\left(#2\right)}
\newcommand{\projectionS}[2]{\text{Proj}_{#1}(#2)} 
\newtheorem{myToDo}{ToDo}
\pgfplotsset{major grid style={thick}}
\definecolor{green}{rgb}{0,0.39,0}
\tikzstyle{basicEdge} = [draw,thick,-]
\tikzstyle{dashedEdge} = [draw,thick,dashed,-]
\tikzstyle{arc} = [draw,thick,->] 
\tikzstyle{arcr} = [draw,thick,->] 
\tikzstyle{arcl} = [draw,thick,<-] 
\tikzstyle{arcrl} = [draw,thick,<->] 
\tikzstyle{forestEdgeTreewidth} = [draw,line width=2pt,-]
\tikzstyle{dottedEdgeTreewidth} = [draw,thick,densely dashed,-]
\tikzstyle{emptyNode}=[] 
\tikzstyle{circle}=[draw,thick,shape=circle,inner sep=0pt,minimum size=5.5mm]
\tikzstyle{circleBlack}=[draw,thick,shape=circle,inner sep=0pt,minimum size=4mm,fill=black]
\tikzstyle{rectangle}=[draw,thick,shape=rectangle,minimum size=4.5mm]
\tikzstyle{rectangleDotted}=[draw,dotted,thick,shape=rectangle,minimum size=4.5mm]
\tikzstyle{diamond}=[draw,thick,shape=rectangle,rotate=-45,minimum size=2.5mm] 
\begin{document}

\title{ILP formulations for the two-stage stochastic Steiner tree problem}
\author{
Bernd Zey
}

\institute{
Department of Computer Science, TU Dortmund, Germany\\
\email{bernd.zey@tu-dortmund.de}
}

\maketitle

\begin{abstract}
We give an overview of new and existing cut- and flow-based ILP formulations for the two-stage stochastic Steiner tree problem 
and compare the strength of the LP relaxations. 


\end{abstract}

\section{Introduction}

The {\em Steiner tree problem (STP)} is a classical network design problem:  
For an undirected graph $G=(V,E)$ with edge costs $c_e \in \bbR^{\geq 0}, \forall e\in E$, and a set of {\em terminals} $\emptyset \not=T\subseteq V$ it asks for a minimum cost edge set $E'\subseteq E$ such that $G[E']$ connects $T$. 
The decision problem of the STP is \cNP-complete \cite{Karp1972}, even in case of edge weights 1 and 2 \cite{BernPlassmann1989} 
or when the graph is planar \cite{GareyJohnson1977}.
It is solvable in polynomial time if the graph is series-parallel (partial 2-tree) \cite{WaldColbourn1983} and 
it is in \cFPT\ with the parameter being the treewidth $k$ (partial $k$-trees) 
\cite{ChimaniMutzelZeySTP-FPT}
or the number of terminals \cite{DreyfusWagner1972}. 
Moreover, the STP is approximable with a constant factor and the currently best ratio is $\ln(4) + \eps = 1.39$ \cite{ByrkaEtAlJournal}.
Moreover, ILP formulations and their polytopes have been studied intensely in the 1990's, see, e.g., 
\cite{ChopraRao1994,ChopraRao1994b,
ChopraTsai2001,
GoemansMyung1993,
KochMartinSTP,
PolzinDaneshmand2001}.

The two-stage stochastic Steiner tree problem  is a natural extension of the STP to a two-stage stochastic combinatorial optimization problem; for an introduction to stochastic programming see, e.g., \cite{BirgeLouveaux1997,ShapiroEtAlBookStochasticProgramming,KallWallaceBook}. 
In the first stage, today, it is possible to buy some ``profitable'' edges while the terminal set and the edge costs are subject to uncertainty.  
However, all possible outcomes are known and given by a set of scenarios. 
In the second stage, in the future, one of the given scenarios is realized and  additional edges have to be installed in order to 
connect the now known set of terminals. 
The objective is to make a decision about edges to be purchased in the first stage and in each scenario such that the terminal sets in each scenario are connected and the expected cost of the overall solution is minimized. 

Formally, the stochastic Steiner tree problem (SSTP) is defined as follows: 
We are given an  undirected graph $G=(V,E)$, first stage edge costs $c_e^0 \in\bbR^{\ge0}, \forall e\in E$, and a set of $K \ge 1$ scenarios with $\cK:=\{1,\ldots, K\}$.
Each scenario $k\in\cK$ is defined by its probability $p^k \in (0; 1]$, 
second-stage edge costs $c_e^k \in\bbR^{\ge0}, \forall e\in E$, 
and a set of terminals $\emptyset\not=T^k\subseteq V$. 
Thereby, it holds $\sum_{k\in\cK}p^k = 1$. 
A feasible solution consists of $K+1$ edge sets $E^0, \dots, E^K \subseteq E$ such that $G[E^0 \cup E^k]$ connects $T^k, \forall k\in\cK$. 
The objective is to minimize the expected cost  $\sum_{e\in E^0} c_e^0 + \sum_{k\in\cK} p^k \sum_{e\in E^k} c_e^k$. 

The \emph{expected cost of an edge} $e\in E$ is defined as $c_e^* := \sum_{k\in \cK} p^k c_e^k$.
W.l.o.g.\ one can assume that $c_e^0 < c_e^*, \forall e\in E$; 
otherwise, this edge would never be purchased in the first stage since it can be installed in every scenario at the same or cheaper cost.
On the other hand it is also valid to assume $c_e^0 > \min_{k\in\cK}\{p^k c_e^k\}, \forall e\in E$, since this edge would never be installed in any scenario.

Notice that for the SSTP the optimum first stage solution $E^0$ does not have to be connected.
In particular, it is easy to construct instances with the optimum first stage solution being a forest, cf.\ Figure \ref{figure:rsstp:wlog:assumption} and Figure \ref{figure:sstp:examples:disconnected:first:stage:directed} in Section \ref{section:sstp:ip:formulations:directed}.
However, fragmented solutions might be unreasonable in practical settings. 
For example, if new cables or pipes are installed in a city one would prefer starting at one point and connecting adjacent streets first and not by digging in several parts of the city simultaneously.

This leads to the \emph{rooted stochastic Steiner tree problem} (rSSTP) which is defined similarly to the SSTP. 
It additionally has a root node $r\in V$ which is a terminal in each scenario, i.e., $r\in T^k, \forall k \in \cK$. 
Then, a feasible solution again consists of $K+1$ edge sets $E^0, \dots, E^K \subseteq E$ such that $G[E^0 \cup E^k]$ connects $T^k, \forall k\in\cK$, but it is required that $G[E^0]$ is a tree containing $r$. 
As for the SSTP the objective is to minimize the expected cost.

Notice that the assumption $c_e^0 < c_e^*, \forall e\in E$, as for the SSTP, is not valid for the rSSTP due to the necessary first stage tree. 
This is shown by Figure \ref{figure:rsstp:wlog:assumption}; here, edge $e_2$ would be disabled in the first stage which prohibits the optimum solution.
By swapping first- and second-stage edge costs this example shows that this holds for 
assumption $c_e^0 > \min_{k\in\cK}\{p^k c_e^k\}$ as well. 

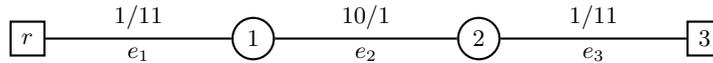
\begin{figure}[tb]
\centering
\begin{tikzpicture}[scale=1.0]

\node[rectangle] (r) at (1,10) {$r$};
\node[circle] (1) at (4,10) {$1$};
\node[circle] (2) at (7,10) {$2$};
\node[rectangle] (3) at (10,10) {$3$};

\draw[basicEdge] (r) -- node[above] {1/11} node[below] {$e_1$} (1);
\draw[basicEdge] (1) -- node[above] {10/1} node[below] {$e_2$} (2);
\draw[basicEdge] (2) -- node[above] {1/11} node[below] {$e_3$} (3);

\end{tikzpicture}

\caption{
A simple example for the SSTP where the optimum first stage solution is disconnected. 
There exists only one scenario (connect terminals $r$ and $3$) and edge costs for the first stage and the scenario, respectively, are written above the edges. 
The optimum solution selects edges $\{r,1\}, \{2,3\}$ in the first stage and $\{1,2\}$ in the scenario with overall cost 3. 
Interpreted as rSSTP instance this example shows that applying the assumption ``$c_e^0 < c_e^*, \forall e\in E$'' is not feasible, cf. text.  
For the rSSTP the optimum solution uses all edges in the first stage with overall cost 12. Disabling $e_2$ in the first stage would imply cost 13.
}
\label{figure:rsstp:wlog:assumption}
\end{figure}

\paragraph{Organization.}
We start in Section \ref{section:sstp:complexity:related:work} with an overview of the related work. 
Section \ref{section:ILP:formulations} introduces known and new ILP formulations based on undirected cuts and flows (Section \ref{section:undirected:formulations}), 
stronger semi-directed formulations by using orientations properties (Section \ref{section:sstp:ip:formulations:semi:directed}), 
and directed formulations for the rooted SSTP  (Section \ref{section:sstp:ip:formulations:directed}). 
In the last part, in Section \ref{section:sstp:ip:formulations:strength}, all described ILP formulations are compared by considering the strength of their LP relaxations.

\section{Related work}
\label{section:sstp:complexity:related:work}

\paragraph{Approximations.}
Although the STP allows constant factor approximations the stochastic problems are harder to approximate.
\cite{RaviSinha-ua-StochasticShortestPath} showed that the group Steiner tree problem, which is $\Omega(\log^{2-\eps}n)$-hard to approximate, can be reduced to the stochastic shortest path problem 
(a special case of the (r)SSTP). 
Nevertheless, in literature stochastic versions of the STP have been mostly investigated for approximation algorithms.
Due to the inapproximability results restricted versions have been considered to obtain approximation algorithms, 
e.g., by introducing a fixed and/or uniform inflation factor or a global terminal (a vertex being a terminal in all scenarios). 
Moreover, different models of scenario representations are used.  
Here, we concentrate on the finite/polynomial scenario model where the random variables of the stochastic problems are assumed to have finite support. 
Other publications consider the black box/oracle model.  
For an overview of these concepts see, e.g., \cite{ShmoysSwamy2006Approx2StageStochOpt}. 

\cite{GuptaEtAl2007} consider the SSTP with $K$ inflation factors and a global terminal
and  present a 40-approximation. 
\cite{GuptaKumar2009StochasticSteinerForest} consider the problem with a uniform fixed inflation factor but without global terminal and describe a constant factor approximation. 

For the black box/oracle model there exist several approximation algorithms 
which are based on the idea of scenario sampling. 
\cite{ImmorlicaEtAlApproxStochCombOpt} present an $\bigO{\log n}$-\-app\-ro\-xi\-mation algorithm for 
a problem which is restricted by a uniform inflation factor. 
\cite{GuptaEtAl2004BoostedSampling,GuptaEtAl2012SamplingApproximationCostSharing} 
introduce the concept of boosted sampling 
and 
consider the problem with a global terminal and a uniform inflation factor; their approximation algorithm has a ratio of $3.55$.
A similar problem is considered by \cite{ShmoysSwamy2006Approx2StageStochOpt} who present a 4-approximation.
\cite{GuptaPal2005StochasticSteinerTreeWithoutRoot} approximate a problem without global terminal.
This problem has a fixed uniform inflation factor and the presented algorithm has a ratio of $12.6$.  

\cite{GuptaEtAl-SSTP-NonUniformInflation} consider no uniform inflation factor  
but there are only two cost functions for the first stage edges and one for the second-stage edges.
The problem is shown to be at least $\Omega(\log\log n)$-hard 
and an approximation algorithm with a polylogarithmic approximation ratio is given.

\paragraph{Related publications.}
Among others, the approach by \cite{GuptaEtAl2007} is based on a primal-dual scheme where an undirected cut- and flow-based formulation is used. 
\cite{BomzeEtAl2010} describe a stronger semi-directed cut-based formulation for the SSTP, apply a Benders decomposition/two-stage branch\&cut approach, and present an experimental study. 
\cite{HeuristicSSTP-Dimacs2014} describe a heuristic for the SSTP which is compared to the exact approach experimentally. 
\cite{LjubicMutzelZeySSNDPpaper,LjubicMutzelZeySSNDParticle} expand the SSTP to stochastic survivable network design problems and undirected and semi-directed cut-based  formulations are introduced. 

Last but not least, fixed parameter tractable algorithms are described for the stochastic problems with parameter overall number of terminals \cite{KurzMutzelZey2013} and on partial 2-trees with parameter number of scenarios \cite{BoeklerZeyMutzel2012}.

\section{ILP formulations}
\label{section:ILP:formulations}

We start by introducing undirected cut- and flow-based formulations for the SSTP in Section \ref{section:undirected:formulations}. 
Afterwards we consider semi-directed models in Section \ref{section:sstp:ip:formulations:semi:directed}. 
The rooted version can be modeled by stronger directed formulations which are described in Section \ref{section:sstp:ip:formulations:directed}.
Finally, Section \ref{section:additional:constraints} deals with additional constraints for the described models. 

\paragraph{Notations and definitions.} 
We always use the upper index 0 to indicate the first stage and indices $1, \ldots, K$ for the $K$ scenarios,
e.g., $x^0$ is the vector of undirected edge variables of the first stage and $y^1$ and $z^k$ are directed arc variables of the first and $k$th scenario, respectively. 
To shorten the notation we use the superscript $1\dots K$ to abbreviate $K$ combined scenario vectors:  
the vector $x^{1\dots K}$ is the transposed concatenation of the vectors $x^1, \dots, x^K$, i.e., 
$x^{1\dots K} = ((x^1)^\tp, \dots, (x^K)^\tp)^\tp$.
We use $0\dots K$ analogously.
Moreover, if, e.g., $x^0$ and $y^{1\dots K}$ are variable vectors 
we abbreviate the vector  
$((x^0)^\tp, (y^{1\dots K})^\tp)^\tp$ by 
$(x^0, y^{1\dots K})$.

For an undirected weighted graph $G=(V,E)$ with edge cost $c_e, \forall e \in E$,  the {\em bidirection} of $G$ is the directed graph $\bar G = (V, A)$ with the arc set $A:=\bigcup_{\{i,j\}\in E} \{(i,j), (j,i)\}$ and
arc costs $c_{ij} = c_{ji} = c_e, \forall e=\{i,j\}\in E$. 
We use the common abbreviations for undirected and directed cuts for a vertex set $\emptyset\not= S \subset V$:
$\delta(S) = \{e\in E \mid |e\cap S| = 1\}$ and $\delta^-(S) = \{(i,j)\in A\mid i\not\in S, j\in S\}$. 
Moreover, if $x$ is a variable vector for undirected edges and $z$ for directed arcs 
we use $x(E') = \sum_{e\in E'} x_e$ and $z(A') = \sum_{a\in A'} z_a$.

In the semi-directed formulations each scenario $k\in\cK$  has a designated root vertex $r^k\in T^k$. 
Then, let $T^k_r := T^k\wo\{r^k\}$ and $V_r^k := V\setminus \{r^k\}$.
Moreover, let $t_r^* := \sum_{k\in\cK}|T_r^k|$.
In the directed formulations with root node $r$ we have $V_r := V\wo\{r\}$ and $T_r^k := T^k\wo\{r\}, \forall k\in\cK$.

\subsection{Undirected formulations}
\label{section:undirected:formulations}

\paragraph{Undirected cut formulation.}
The following IP is a formulation based on undirected cuts and was frequently considered in literature, e.g., by  \cite{GuptaEtAl2007}.
It is the classical expansion of the undirected cut formulation for the STP, see, e.g., \cite{KochMartinSTP,PolzinDaneshmand2001}. 
Binary decision variables for the first stage edges are denoted by $x_e^0, \forall e\in E$, and scenario edges of the $k$th scenario by $x_e^k, \forall e\in E, \forall k\in \cK$. 
The objective is to minimize the expected cost which is the sum of the selected first stage edges plus the sum of second-stage edges weighted by the scenario probability.
\begin{align}
\model[uc]{SSTP}\, 
\min\, &\sum_{e\in E}c_e^0 x_e^0 + \sum_{k\in\cK} p^k \sum_{e\in E}c_e^k x_e^k \nonumber \\
\text{s.t. }  
  (x^0 + x^k)(\delta(S)) &\ge 1 \hspace{25pt}\forall k\in\cK, \forall S\subseteq V\colon \emptyset\not=T^k\cap S\not=T^k \label{SSTP:ucut:undirected:cuts}\\
  x^0 &\in \{0, 1\}^{|E|} &&\\
  x^{1 \dots K} &\in \{0, 1\}^{|E|\cdot K} &&
\end{align}

Constraints \eqref{SSTP:ucut:undirected:cuts} are undirected cuts ensuring the connectivity of each scenario terminal set.
Thereby, first-stage and second-stage edges can be used to satisfy a cut $S\subseteq V$; 
we use the notation  $(x^0 + x^k)(\delta(S)) = \sum_{e\in\delta(S)} x_e^0 + x_e^k$.

\paragraph{Undirected flow formulation.}
Here, we present a similar model to the one introduced by \cite{GuptaEtAl2007}.
We modify the model such that we have a flow only in the second stage. 
Thereby, the flow can be constructed by using selected first-stage or second-stage edges. 

We again use  variables $x^0$ and $x^k, \forall k\in\cK$, for modeling the solution edges.
Moreover, the bidirection with arc set $A$ is considered and a flow $f$ is computed in each scenario $k\in\cK$ from a designated root node $r^k\in T^k$ to each terminal.
We use variables $f_{ij}^{k,t}$ for each scenario $k\in\cK$, arc $(i,j)\in A$, and terminal $t\in T^k_r$. 
The undirected flow model for the SSTP then reads as follows:
\begin{align}
\model[uf]{SSTP}\, 
\min\, &\sum_{e\in E}c_e^0 x_e^0 + \sum_{k\in\cK} p^k \sum_{e\in E}c_e^k (x_e^k - x_e^0) \nonumber\\
\text{s.t. }  
%
x_e^0 + x_e^k &\ge f_{ij}^{k,t}, \nonumber\\
x_e^0 + x_e^k &\ge f_{ji}^{k,t}  
	\hspace{20pt} \forall k \in\cK, \forall e=\{i,j\}\in E, \forall t\in T^k_r \label{sstp:uflow:capacity:constraint:scenario}\\
\sum_{(h,i) \in A} f_{hi}^{k,t} - \sum_{(i, j) \in A} f_{ij}^{k,t} &=
\left.\begin{cases}
-1, & \text{if } i = r^k\\
1, & \text{if } i = t\\
0, & \text{otherwise}
\end{cases} \right\}
\begin{array}{l}
\forall k\in\cK, \forall t\in T^k_r,\\ \forall i \in V 
\end{array}
\label{sstp:uflow:flow:conservation} \\
f &\in [0,1]^{|A|\cdot t_r^*} \label{sstp:uflow:nonnegativity:f}\\
x^0 &\in \{0, 1\}^{|E|} \label{sstp:uflow:integrality:x0}\\
x^{1 \dots K}  &\in \{0, 1\}^{|E|\cdot K} \label{sstp:uflow:integrality:xk}
\end{align}

In this model there has to be one unit of flow in each scenario from the root to each terminal.
This is enforced by the flow conservation constraints \eqref{sstp:uflow:flow:conservation};
the root has one outgoing flow (first case), the terminal one ingoing flow (second case), and for all other vertices the ingoing flow equals the outgoing flow.
Edges which are used for routing the flow are selected as solution edges by the capacity constraints \eqref{sstp:uflow:capacity:constraint:scenario}, either as first-stage or as second-stage edges.
It is easy to see that the formulation $\model[uf]{SSTP}$ is valid and that it is equivalent to the one introduced by \cite{GuptaEtAl2007}.

Due to  the deterministic STP it is not surprising that the cut-based formulation is equivalent to the flow formulation, cf.\ Section \ref{section:sstp:ip:formulations:strength}. 
However, there exist stronger formulations by using orientation properties.

\subsection{Semi-directed formulations}
\label{section:sstp:ip:formulations:semi:directed}

\paragraph{Semi-directed cut formulations.}

In the following we introduce three semi-directed cut-based formulations for the SSTP. 
All models are based on the application of orientation properties like in the directed cut formulation for the  STP.
However, edge variables $x^0$ for the first stage remain undirected in all semi-directed formulations.
As will be discussed at the beginning of Section \ref{section:sstp:ip:formulations:directed}, using a directed first stage is difficult and no stronger formulation is known.
On the other hand, it is possible to consider the bidirected input graph $\bar G=(V,A)$ in the second stage.

In the first semi-directed model we use arc variables $z^k_a, \forall a\in A, \forall k\in\cK$. 
We search for a first-stage edge set $E^0$ and second-stage arc sets $A^1, \dots, A^K$ such that $E^0 \cup A^k$ contains a semi-directed path from a designated terminal $r^k\in T^k$ to each terminal in $T_r^k$, for all scenarios $k\in\cK$.
In other words, $A^0 \cup A^k$ has to contain a feasible arborescence for all scenarios $k\in\cK$, with $A^0 := \bigcup_{\{i,j\}\in E^0} \{(i,j), (j,i)\}$.

To shorten the notation we write $(x^0+z^k)(\delta^-(S)) := x^0(\delta(S)) + z^k(\delta^-(S)) = \sum_{(i,j)\in\delta^-(S)} x^0_{\{i,j\}} + z_{ij}^k$ for semi-directed cuts.
%
\begin{align}
\model[sdc1]{SSTP}\, 
\min\, &\sum_{e\in E} c_e^0 x_e^0 + \sum_{k\in\cK} p^k \sum_{e=\{i,j\}\in E} c_e^k (z_{ij}^k + z_{ji}^k) \nonumber \\
\text{s.t. }  
(x^0 + z^k)(\delta^-(S)) &\ge 1 \hspace{25pt}\forall k\in\cK, \forall S\subseteq V_r^k\colon S\cap T_r^k \not=\emptyset \label{SSTP:sdcut1:semi:directed:cuts}\\
x^0 &\in \{0, 1\}^{|E|} \\
z^{1\dots K} &\in \{0, 1\}^{|A|\cdot K}
\end{align}

This first formulation uses semi-directed cuts, i.e., each cut \eqref{SSTP:sdcut1:semi:directed:cuts} for scenario $k\in\cK$ can be fulfilled by first-stage edges or by second-stage arcs from this scenario.

\begin{lemm}
Formulation $\model[sdc1]{SSTP}$ models the stochastic Steiner tree problem correctly.
\end{lemm}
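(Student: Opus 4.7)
The plan is to verify both directions of the equivalence: every feasible SSTP solution yields a feasible integral solution of $\model[sdc1]{SSTP}$ of the same cost, and conversely every feasible integral solution of $\model[sdc1]{SSTP}$ yields a feasible SSTP solution of at most the same cost. Together, these show that the two optima coincide and hence that $\model[sdc1]{SSTP}$ is a valid formulation.

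For the forward direction, let $(E^0, E^1, \dots, E^K)$ be a feasible SSTP solution and set $x_e^0 := 1$ iff $e \in E^0$. For each scenario $k \in \cK$, since $G[E^0 \cup E^k]$ connects $T^k$, I would extract a tree $\tau \subseteq E^0 \cup E^k$ spanning $T^k$, root it at $r^k$, and set $z_{ij}^k := 1$ for every edge of $\tau \cap E^k$ oriented away from $r^k$; all other $z^k$ variables stay at $0$. For any cut $S \subseteq V_r^k$ with some $t \in S \cap T_r^k$, the unique $r^k$-to-$t$ path in $\tau$ crosses $\delta(S)$ at some edge: either it is a first-stage edge (contributing to $x^0(\delta(S))$) or it is a second-stage arc oriented into $S$ (contributing to $z^k(\delta^-(S))$). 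In either case, \eqref{SSTP:sdcut1:semi:directed:cuts} is satisfied, and the ILP objective agrees with the SSTP objective exactly since each second-stage edge carries cost $c_e^k$ once.

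For the reverse direction, take a feasible integral $(x^0, z^{1\dots K})$ and define $E^0 := \{e \mid x_e^0 = 1\}$ together with $E^k := \{\{i,j\} \mid z_{ij}^k = 1 \text{ or } z_{ji}^k = 1\}$ for each $k \in \cK$. Fix a terminal $t \in T_r^k$ and let $S \subseteq V$ be the set of vertices $v$ from which $t$ is reachable by a walk that uses first-stage edges of $E^0$ in either direction and scenario-$k$ arcs with $z_{ij}^k = 1$ in their forward direction. Then $t \in S$. If $r^k \notin S$, then $S \subseteq V_r^k$ and $S \cap T_r^k \neq \emptyset$, so \eqref{SSTP:sdcut1:semi:directed:cuts} would force at least one first-stage edge in $\delta(S)$ or one scenario-$k$ arc in $\delta^-(S)$; but any such edge or arc would allow its endpoint outside $S$ to reach $t$, contradicting its exclusion from $S$. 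Hence $r^k \in S$, so $G[E^0 \cup E^k]$ contains a $r^k$-$t$-path, and thus connects $T^k$. The SSTP objective is bounded by the ILP value because $c_e^k (z_{ij}^k + z_{ji}^k) \geq c_e^k$ for every $e = \{i,j\} \in E^k$.

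The step I expect to be the main (though not deep) obstacle is setting up the reachability set $S$ ``backward from $t$'' in exactly the right way, so that the absence of first-stage edges in $\delta(S)$ and of scenario-$k$ arcs in $\delta^-(S)$ is immediate from the definition and directly contradicts \eqref{SSTP:sdcut1:semi:directed:cuts} whenever $r^k \notin S$. Once this reachability structure is fixed, the remaining items---construction of the oriented scenario arborescence in the forward direction and the objective comparison---are routine.
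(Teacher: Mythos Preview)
Your proposal is correct and follows essentially the same two-direction approach as the paper, with the reverse direction's reachability argument spelled out in detail where the paper leaves it implicit. One tiny inaccuracy: since in the forward direction you only orient edges of $\tau \cap E^k$ rather than all of $E^k$, the resulting ILP objective is at most (not necessarily exactly equal to) the SSTP objective---but $\leq$ is precisely the inequality the argument requires.
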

\begin{proof}
Let $\tilde E^0, \tilde E^1, \dots, \tilde E^K$ be an optimum solution for the stochastic Steiner tree problem.
Since this solution connects all terminals in all scenarios we can easily find 0/1-values for $x^0$ and $z^k, \forall k\in\cK$, respectively, by using exactly the edges $\tilde E^0, \dots, \tilde E^K$ such that there is a semi-directed path from $r^k$ to each terminal in $T_r^k, \forall k\in\cK$.

On the other hand, due to constraints \eqref{SSTP:sdcut1:semi:directed:cuts} an optimum solution $(\tilde x^0, \tilde z^{1\dots K})$ to $\model[sdc1]{SSTP}$ connects the designated root node $r^k$ with semi-directed paths to each terminal in $T_r^k$, for all scenarios $k\in\cK$. 
Hence, using the selected undirected first-stage edges plus the undirected counterparts of the second-stage arcs gives a feasible solution to the SSTP with the same objective value.
\qed
\end{proof}

In formulation  $\model[sdc1]{SSTP}$ a selected first-stage edge fulfills all related semi-directed cuts. 
Hence, in the extreme case when all terminals are connected via first-stage edges this model is not stronger than the undirected model.

This drawback is overcome by the second semi-directed formulation \cite{BomzeEtAl2010}.
It is based on additional capacity constraints which enforce that selected first-stage edges have to be incorporated into the second-stage solution:  
Each selected first-stage edge has to be oriented such that a feasible arborescence is established in each scenario.
Due to this change, the cut constraints are now purely directed and contain only second-stage arc variables $y^{1\dots K}$.
Because of the different meaning of the second-stage arc variables we use the identifier $y^{1\dots K}$ instead of $z^{1\dots K}$ as in $\model[sdc1]{SSTP}$.
The second semi-directed cut formulation for the SSTP reads as follows:
\begin{align}
\model[sdc2]{SSTP}\, 
\min\, &\sum_{e\in E} c_e^0 x_e^0 + \sum_{k\in\cK} p^k \sum_{e=\{i,j\}\in E} c_e^k (y_{ij}^k + y_{ji}^k - x_e^0) \nonumber\\
\text{s.t. }  
y^k(\delta^-(S)) &\ge 1 
	\hphantom{x_e^0}\hspace{25pt}\forall k\in\cK, \forall S\subseteq V_r^k\colon S\cap T_r^k \not=\emptyset \label{SSTP:sdcut2:directed:cuts}\\
y_{ij}^k + y_{ji}^k &\ge x_e^0 
	\hphantom{1}\hspace{25pt}\forall k\in\cK, \forall e=\{i,j\}\in E \label{SSTP:sdcut2:capacity:first:second:stage}\\
x^0 &\in \{0, 1\}^{|E|}  \label{SSTP:sdcut2:x0:integrality} \\
y^{1\dots K} &\in \{0, 1\}^{|A|\cdot K} \label{SSTP:sdcut2:yk:integrality}
\end{align}

This formulation is basically a union of $K$ directed Steiner tree formulations joined by the first stage through capacity constraints \eqref{SSTP:sdcut2:capacity:first:second:stage}.
Compared to the previous cut-based formulations the objective function contains a corrective term for subtracting the additional cost that results from these constraints.

\begin{lemm}[\cite{BomzeEtAl2010}]
Formulation $\model[sdc2]{SSTP}$ models the stochastic Steiner tree problem correctly.
\end{lemm}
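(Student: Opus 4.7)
The plan is to show that the optima of $\model[sdc2]{SSTP}$ and the SSTP coincide by converting feasible solutions in both directions. First I would take an optimal SSTP solution $\tilde E^0, \tilde E^1, \dots, \tilde E^K$, W.l.o.g.\ with $\tilde E^k \cap \tilde E^0 = \emptyset$ for each $k$. Setting $x_e^0 = 1$ iff $e \in \tilde E^0$, and noting that $G[\tilde E^0 \cup \tilde E^k]$ connects $T^k$, its bidirection contains an arborescence rooted at $r^k$ spanning $T^k$; I orient this arborescence and set the corresponding $y_{ij}^k = 1$. The directed cuts \eqref{SSTP:sdcut2:directed:cuts} are then satisfied via those directed paths, each $e \in \tilde E^k$ used contributes exactly $c_e^k$ to the scenario sum, and each $e \in \tilde E^0$ used contributes $c_e^k(1-1) = 0$ via the corrective term $-x_e^0$.

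The main obstacle is the capacity constraints \eqref{SSTP:sdcut2:capacity:first:second:stage}, which force \emph{every} first-stage edge to be oriented in \emph{every} scenario, including edges of $\tilde E^0$ that the natural arborescence for scenario $k$ does not use (these may even lie in components of $G[\tilde E^0]$ disconnected from $r^k$). I would resolve this by assigning each such edge an arbitrary single orientation, activating exactly one of $y_{ij}^k, y_{ji}^k$. The delicate point is to avoid activating both orientations: since $-x_e^0$ cancels only one unit, a second unit would inflate the scenario cost by $c_e^k$ and break the equal-cost claim. With this rule every first-stage edge still contributes $0$ per scenario, giving an IP solution of cost equal to the SSTP solution.

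For the reverse direction, given any feasible IP solution $(\tilde x^0, \tilde y^{1 \dots K})$, I would set $\tilde E^0 := \{e : \tilde x_e^0 = 1\}$ and $\tilde E^k := \{\{i,j\} : \tilde y_{ij}^k + \tilde y_{ji}^k \geq 1\} \wo \tilde E^0$. The cuts \eqref{SSTP:sdcut2:directed:cuts} guarantee a directed $r^k$-to-$t$ path inside $\{(i,j) : \tilde y_{ij}^k = 1\}$ for every $t \in T_r^k$, so $G[\tilde E^0 \cup \tilde E^k]$ connects $T^k$. For the cost, split the scenario sum according to whether $e \in \tilde E^0$: the capacity constraint gives $\tilde y_{ij}^k + \tilde y_{ji}^k - \tilde x_e^0 \geq 0$ for $e \in \tilde E^0$ (where $e \notin \tilde E^k$ anyway), and for $e \notin \tilde E^0$ integrality gives $\tilde y_{ij}^k + \tilde y_{ji}^k \geq 1 = [e \in \tilde E^k]$ whenever $e \in \tilde E^k$. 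Hence the IP objective is at least the SSTP cost of the induced solution, and combining the two directions yields equality of the optima.
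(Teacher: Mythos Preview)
Your proposal is correct and follows essentially the same two-direction translation as the paper: build an $r^k$-rooted arborescence from $\tilde E^0\cup\tilde E^k$ in one direction, and read off undirected edge sets from $(\tilde x^0,\tilde y^{1\dots K})$ in the other. You are in fact more careful than the paper on two points---you explicitly handle the capacity constraints \eqref{SSTP:sdcut2:capacity:first:second:stage} for first-stage edges not lying on the scenario arborescence (the paper leaves this implicit), and in the reverse direction you subtract $\tilde E^0$ from $\tilde E^k$ and argue an inequality rather than the paper's somewhat loose ``same objective value'' claim.
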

\begin{proof}
An optimum solution $\tilde E^0, \tilde E^1, \dots, \tilde E^K$  to the SSTP can be easily translated into a feasible solution for model $\model[sdc2]{SSTP}$ by using the edge set $\tilde E^0 \cup \tilde E^k$ for finding a feasible arborescence in each scenario $k\in\cK$; 
then let variables $x^0$ represent $\tilde E^0$ and set arc variables $y^k$ according to the arborescences, $\forall k\in\cK$.

Contrarily, due to the correctness of the directed cut formulation for the deterministic STP
an optimum solution $(\tilde x^0, \tilde y^{1\dots K})$ to $\model[sdc2]{SSTP}$ contains an $r^k$-rooted arborescence in each scenario $k\in\cK$. 
Hence, $\tilde E^0, \tilde E^1, \dots, \tilde E^K$, with $\tilde E^0 := \{e\in E\mid \tilde x_e^0 = 1\}$ and $\forall k\in\cK\colon \tilde E^k := \{e=\{i,j\}\in E\mid \tilde y_{ij}^k = 1 \vee \tilde y_{ji}^k = 1\}$, is a feasible solution with the same objective value.
\qed
\end{proof}

Let $(\text{SSTP}_\text{sdc2}^{\text{rel}:x^0})$ denote formulation $\model[sdc2]{SSTP}$ with the integrality constraint \eqref{SSTP:sdcut2:x0:integrality} being relaxed to $x^0 \in [0,1]^{|E|}$.

\begin{lemm}[\cite{BomzeEtAl2010}]
\label{sstp:sdcut2:integrality}
The optimum solution to $(\mathit{SSTP}_\mathit{sdc2}^\mathit{rel:x^0})$ is integer.
\end{lemm}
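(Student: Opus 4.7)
The plan is to exploit the mixed-integer structure of the relaxation: since only $x^0$ is relaxed, while $y^{1\dots K}$ is still required to be binary, I would fix an optimum choice of $y^{1\dots K}$ and argue that the optimum $x^0$, as a function of this fixed $y$, is automatically integer. The whole argument hinges on the w.l.o.g.\ assumption $c_e^0 < c_e^*$ for all $e\in E$ stated in the introduction.

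First, I would rewrite the objective function of $\model[sdc2]{SSTP}$ to isolate the dependence on $x^0$. Expanding and grouping terms gives
\begin{align*}
\sum_{e\in E} c_e^0 x_e^0 + \sum_{k\in\cK} p^k \sum_{e=\{i,j\}\in E} c_e^k(y_{ij}^k + y_{ji}^k - x_e^0)
= \sum_{e\in E} (c_e^0 - c_e^*)\, x_e^0 + \Phi(y),
\end{align*}
where $\Phi(y)$ collects the terms depending only on $y^{1\dots K}$. By the w.l.o.g.\ assumption, every coefficient $c_e^0 - c_e^*$ is strictly negative. Hence, for any fixed feasible $y^{1\dots K}$, the subproblem in $x^0$ is a linear minimization in which each variable wants to be as large as possible, constrained only by the bounds $0 \le x_e^0 \le 1$ together with the capacity constraints \eqref{SSTP:sdcut2:capacity:first:second:stage}.

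Second, I would observe that, for fixed $y^{1\dots K}$, the optimum value of $x_e^0$ is
\begin{align*}
x_e^{0*} = \min\Bigl\{1,\; \min_{k\in\cK} (y_{ij}^k + y_{ji}^k)\Bigr\},
\end{align*}
since pushing $x_e^0$ higher is always profitable and no constraint other than these bounds involves $x_e^0$. Because $y^{1\dots K}$ is binary, each quantity $y_{ij}^k + y_{ji}^k$ lies in $\{0,1,2\}$, so the minimum over $k$ is integer and $x_e^{0*} \in \{0,1\}$.

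Finally, I would combine the two observations: take any optimum $(\tilde x^0, \tilde y^{1\dots K})$ of $(\mathit{SSTP}_\mathit{sdc2}^\mathit{rel:x^0})$ and replace $\tilde x^0$ by the integer vector $x^{0*}$ defined above using $\tilde y^{1\dots K}$. The capacity constraints remain satisfied by construction, the cut constraints \eqref{SSTP:sdcut2:directed:cuts} are unaffected (they involve only $y$), and the objective does not increase. This yields an integer optimum solution, proving the claim. The only genuinely delicate point is the appeal to $c_e^0 < c_e^*$; without that reduction the objective coefficient of $x_e^0$ could be non-negative and the argument would collapse, which is why I would state that assumption explicitly at the start of the proof.
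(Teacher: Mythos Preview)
Your proof is correct and follows essentially the same route as the paper: both rewrite the objective so that the coefficient of $x_e^0$ becomes $c_e^0 - c_e^*$, then use the integrality of $y^{1\dots K}$ together with the capacity constraint \eqref{SSTP:sdcut2:capacity:first:second:stage} to force the optimal $x_e^0$ to be $0$ or $1$. The one noteworthy difference is that you lean on the w.l.o.g.\ reduction $c_e^0 < c_e^*$, whereas the paper's proof does not: it simply splits on the sign of $c_e^0 - c_e^*$, rounding $x_e^0$ up to $1$ when the coefficient is negative, down to $0$ when it is positive, and to $0$ when it vanishes. So the lemma actually holds without that preprocessing step, and your closing remark that the argument ``would collapse'' otherwise is a bit too pessimistic---the same separability you exploit handles all three sign cases with one extra line.
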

\begin{proof}
Assume there exists an optimum solution $(\tilde x^0, \tilde y^{1\dots K})$ to $(\text{SSTP}_\text{sdc2}^{\text{rel}:x^0})$ that is non-integer.
Let variable $\tilde x_e^0$ corresponding to edge $e = \{i,j\} \in E$ be fractional, i.e., $0 < \tilde x_e^0 < 1$. 
The term in the objective function corresponding to edge $e$ is:
\begin{align}
& c_e^0 \tilde x_e^0 + \sum_{k\in\cK} p^k c_e^k (\tilde y_{ij}^k + \tilde y_{ji}^k - \tilde x_e^0) \nonumber\\
=\, &c_e^0 \tilde x_e^0 -   \sum_{k\in\cK} p^k c_e^k \tilde x_e^0 + \sum_{k\in\cK} p^k c_e^k (\tilde y_{ij}^k + \tilde y_{ji}^k) \nonumber \\
=\, & (c_e^0 - c_e^*) \tilde x_e^0 + \sum_{k\in\cK} p^k c_e^k (\tilde y_{ij}^k + \tilde y_{ji}^k)\nonumber
\end{align}

In case $c_e^0 < c_e^*$ set $\tilde x_e^0 := 1$ and 
if $c_e^0 > c_e^*$ set $\tilde x_e^0 := 0$. 
In both cases the resulting solution is still feasible: 
Constraint \eqref{SSTP:sdcut2:capacity:first:second:stage} together with the integrality of $y^{1\dots K}$ ensures that for all scenarios $k\in\cK$ it holds $\tilde y_{ij}^k + \tilde y_{ji}^k \ge 1$ and hence,  \eqref{SSTP:sdcut2:capacity:first:second:stage} is still satisfied. 
Moreover, the objective value improves which is a contradiction.

In case $c_e^0 = c_e^*$ variable $x_e^0$ has coefficient 0 in the objective function and can be fixed to $\tilde x_e^0 := 0$.  
\qed
\end{proof}

We like to shortly revisit formulation $\model[$\text{uc}$]{SSTP}$ based on undirected cuts. 
Notice that by adding similar capacity constraints $x_e^k \geq x_e^0, \forall k\in \cK, \forall e\in E$, 
the undirected cuts \eqref{SSTP:ucut:undirected:cuts} contain only second-stage variables, as in model $\model[$\text{sdc2}$]{SSTP}$. 
Moreover, it is possible to relax the first-stage variables to $x^0\in[0,1]^{|E|}$ without violating overall integrality; 
the proof is very similar to the one of Lemma \eqref{sstp:sdcut2:integrality}. 
On the other hand, these modifications do not influence the strength of the LP relaxation and this formulation is as strong as $\model[$\text{uc}$]{SSTP}$.

We close the discussion on semi-directed cut-based formulations by rewriting the objective function of $\model[sdc2]{SSTP}$. 
By moving the first-stage variables to the first sum gives the following formulation $\model[$\text{sdc2}^*$]{SSTP}$ (\cite{BomzeEtAl2010}):
\begin{align}
\model[$\text{sdc2}^*$]{SSTP} \min \sum_{e\in E} (c_e^0 - c_e^*) x_e^0 + &\sum_{k\in\cK} p^k \sum_{e=\{i,j\}\in E} c_e^k (y_{ij}^k + y_{ji}^k) \nonumber\\
\text{s.t. } (x^0,y^{1\ldots K}) \text{ satisfies }&\text{\eqref{SSTP:sdcut2:directed:cuts}--\eqref{SSTP:sdcut2:yk:integrality}} \nonumber
\end{align}

Obviously, $\model[$\text{sdc2}^*$]{SSTP}$ and $\model[sdc2]{SSTP}$ are identical. 
However, when the model gets decomposed with Benders' decomposition the modified objective function does matter, cf.\ \cite{BomzeEtAl2010}.
Then, the master problem of formulation $\model[$\text{sdc2}^*$]{SSTP}$ has negative coefficients (since $c_e^* > c_e^0$) whereas the coefficients in the master problem of $\model[sdc2]{SSTP}$ are non-negative. 
Moreover, this change affects the primal and dual subproblems and in particular, the generated 
optimality cuts.

\paragraph{Semi-directed flow formulation.}

The flow formulation can be strengthened as in the deterministic setting.
One simply has to enforce that a selected undirected edge cannot be used for routing flow in both directions at the same time, i.e., for one commodity.
Therefore, directed arc variables $y^k, \forall k\in \cK$, are used and constraints \eqref{sstp:uflow:capacity:constraint:scenario} are replaced by the stronger constraints \eqref{sstp:sdflow:capacity}.
To highlight the connection to formulation  $\model[sdc2]{SSTP}$ we use the same capacity constraints \eqref{sstp:sdflow:capacity:first:second:stage}.
%
\begin{align}
\model[sdf]{SSTP}\,
\min\, &\sum_{e\in E}c_e^0 x_e^0 + \sum_{k\in\cK} p^k \sum_{e=\{i,j\}\in E}c_e^k (y_{ij}^k + y_{ji}^k - x_e^0)\nonumber \\
\text{s.t. }
f \text{ satisfies } &\text{\eqref{sstp:uflow:flow:conservation}}\nonumber\\
y_{ij}^k &\ge f_{ij}^{k,t}  
	\hphantom{x_e^0}\hspace{25pt} \forall k \in\cK, \forall (i,j)\in A, \forall t\in T^k_r \label{sstp:sdflow:capacity} \\
y_{ij}^k + y_{ji}^k &\ge x_e^0 
	\hphantom{f_{ij}^{k,t}}\hspace{25pt} \forall k \in\cK, \forall e=\{i,j\}\in E \label{sstp:sdflow:capacity:first:second:stage} \\
f &\in [0,1]^{|A|\cdot t_r^*} \label{sstp:sdflow:nonnegativity:f}\\
 x^0 &\in \{0, 1\}^{|E|} \label{sstp:sdflow:integrality:x0}\\
  y^{1\dots K} &\in \{0, 1\}^{|A|\cdot K} && \label{sstp:sdflow:yk:integrality}
\end{align}

Formulation $\model[sdf]{SSTP}$ is the equivalent to  $\model[sdc2]{SSTP}$:
instead of satisfying directed cuts one has to find a feasible flow in each scenario and 
moreover, the scenarios are linked by the first-stage and  capacity constraints \eqref{sstp:sdflow:capacity:first:second:stage}.

\begin{observation}
Formulation $\model[sdf]{SSTP}$ models the stochastic Steiner tree problem correctly.
\end{observation}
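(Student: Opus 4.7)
The plan is to reduce correctness of $\model[sdf]{SSTP}$ to correctness of $\model[sdc2]{SSTP}$, already established above. Both formulations share the same objective, the same domain for $x^0$ and $y^{1\dots K}$, and the same linking constraints \eqref{sstp:sdflow:capacity:first:second:stage}. Hence it suffices to show that, for any integral $y^{1\dots K}$, the flow system \eqref{sstp:uflow:flow:conservation}, \eqref{sstp:sdflow:capacity}, \eqref{sstp:sdflow:nonnegativity:f} admits a feasible $f$ if and only if the directed cut inequalities \eqref{SSTP:sdcut2:directed:cuts} are satisfied. This step is the standard argument lifting the deterministic directed flow formulation of the STP to its directed cut counterpart, applied independently in every scenario.

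First, for the direction from flows to cuts, I would fix $k\in\cK$ and a terminal $t\in T_r^k$ and interpret the vector $(f^{k,t}_{ij})_{(i,j)\in A}$ as a unit $r^k$--$t$-flow in the digraph $\bar G$ with arc capacities $y^k$; by the max-flow min-cut theorem, every $r^k$--$t$-cut has capacity at least one, i.e., $y^k(\delta^-(S))\ge 1$ for all $S\subseteq V_r^k$ with $t\in S$. Ranging $t$ over $T_r^k$ yields exactly \eqref{SSTP:sdcut2:directed:cuts}. Conversely, given integer $(x^0,y^{1\dots K})$ satisfying \eqref{SSTP:sdcut2:directed:cuts}--\eqref{SSTP:sdcut2:yk:integrality}, the cut inequalities guarantee for each scenario $k$ and each terminal $t\in T_r^k$ an $r^k$--$t$-path in the subgraph of $\bar G$ induced by the arcs with $y_{ij}^k = 1$; sending one unit of flow along this path defines $f^{k,t}\in\{0,1\}^{|A|}$, which together fulfills \eqref{sstp:uflow:flow:conservation}, \eqref{sstp:sdflow:capacity}, and \eqref{sstp:sdflow:nonnegativity:f}. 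Thus the projection of $\model[sdf]{SSTP}$ onto $(x^0,y^{1\dots K})$ coincides with the feasible set of $\model[sdc2]{SSTP}$, and the objective values agree edge by edge, so the already proven correctness of $\model[sdc2]{SSTP}$ transfers directly.

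There is no real obstacle: the only non-trivial ingredient is the per-scenario, per-terminal invocation of max-flow min-cut, and one must verify that the resulting flows lie in $[0,1]^{|A|\cdot t_r^*}$, which is immediate from the unit demand together with integrality of $y^{1\dots K}$. The linking constraint \eqref{sstp:sdflow:capacity:first:second:stage} plays no role in this equivalence argument and simply carries over unchanged from $\model[sdc2]{SSTP}$.
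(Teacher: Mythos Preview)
Your argument is correct and matches the paper's approach: the paper states this as a bare observation without proof, noting only in the preceding sentence that $\model[sdf]{SSTP}$ is the flow counterpart of $\model[sdc2]{SSTP}$, with directed cuts replaced by per-scenario flows and the same first-stage linking constraints. You spell out exactly this reduction---projecting out the flow variables via max-flow min-cut per scenario and per terminal to recover \eqref{SSTP:sdcut2:directed:cuts}, while the shared objective and \eqref{sstp:sdflow:capacity:first:second:stage} carry over unchanged---so your proposal is a faithful and complete elaboration of what the paper leaves implicit.
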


%
\subsection{Directed formulations}
\label{section:sstp:ip:formulations:directed}

Formulating the SSTP with a directed first stage causes difficulties when first-stage solutions are disconnected.
Consider Figure \ref{figure:sstp:examples:disconnected:first:stage:directed} which depicts such an example.
Here, the optimum first-stage solution is disconnected as shown in Figure \ref{figure:sstp:examples:disconnected:first:stage:directed} (a). 
The optimum arborecences of the two scenarios are given in (b). 
In particular, edge $e_4$ is used in direction $(3,4)$ in the first and direction $(4,3)$ in the second scenario.
Hence, already fixing an orientation in the first stage omits an optimum scenario solution---or at least, makes the corresponding solution more expensive.

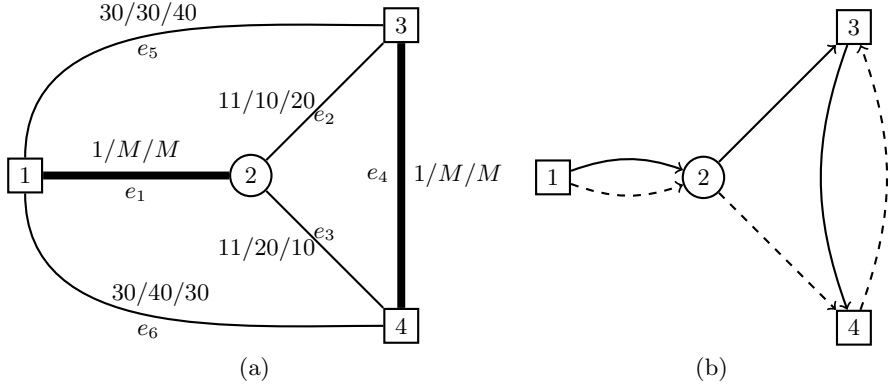
\begin{figure}[tb]
\centering
\subfigure[]{
\begin{tikzpicture}[scale=1.0]

\node[rectangle] (1) at (9,10) {$1$};
\node[circle] (2) at (12,10) {$2$};
\node[rectangle] (3) at (14,12) {$3$};
\node[rectangle] (4) at (14,8) {$4$};

\draw[basicEdge,line width=3pt] (1) -- node[above] {1/$M$/$M$} node[below] {$e_1$} (2);
\draw[basicEdge] (2) -- node[left] {11/10/20} node[below] {$e_2$} (3);
\draw[basicEdge] (2) -- node[left] {11/20/10} node[above] {$e_3$} (4);

\draw[basicEdge,line width=3pt] (3) -- node[right] {1/$M$/$M$} node[left] {$e_4$} (4);

\draw[basicEdge] (1) to [out=90,in=180] node[above] {30/30/40} node[below] {$e_5$} (3);
\draw[basicEdge] (1) to [out=270,in=180] node[above] {\quad 30/40/30} node[below] {$e_6$} (4);

\end{tikzpicture}
}
\subfigure[]{
\begin{tikzpicture}[scale=1.0]

\node[rectangle] (1) at (10,10) {$1$};
\node[circle] (2) at (12,10) {$2$};
\node[rectangle] (3) at (14,12) {$3$};
\node[rectangle] (4) at (14,8) {$4$};

\draw[arc] (1) to [out=20,in=160] node[above] {} (2);
\draw[arc] (2) -- node[above] {} (3);
\draw[arc] (3) to [out=250,in=110] node[above] {} (4);

\draw[arc,dashed] (1) to [out=340,in=200] node[above] {} (2);
\draw[arc,dashed] (2) -- node[above] {} (4);
\draw[arc,dashed] (4) to [out=70,in=290] node[above] {} (3);

\end{tikzpicture}
}
\caption{
(a) 
An SSTP instance with two equally probable scenarios with identical terminal set \{1,3,4\};
the edge costs for the first stage and the two scenarios are written next to the edges (i.e., first stage/first scenario/second scenario) with $M$ being a sufficiently large positive value.
The optimum solution edges of the first stage are highlighted by thick edges; scenario 1 and 2 additionally purchase edge $e_2$ and $e_3$, respectively.
The optimum solution has cost $1 + 1+ 0.5\cdot10 + 0.5\cdot10 = 12$.
(b) Minimum arborescences ($y^k$-values) in the scenarios for formulation $(\text{SSTP}_{\text{sdc2}})$. 
Solid arcs represent the first and dashed arcs the second scenario.}
\label{figure:sstp:examples:disconnected:first:stage:directed}
\end{figure}

\paragraph{Directed cut formulations for the rSSTP.}
While we are not aware of a fully directed and stronger cut-based formulation for the SSTP the rooted version of the SSTP permits a model with directed cuts only.
For the following formulations we again consider the weighted bidirection $\bar G=(V,A)$ of the input graph. 

The first formulation is called $\model[dc1]{rSSTP}$; afterwards, we introduce two more formulations $\model[dc2]{rSSTP}$ and $\model[$dc2^*$]{rSSTP}$, respectively, similar to the semi-directed case.
We use directed arc variables $z^0$ and $z^k$ for the first and second stage in scenario $k\in\cK$, respectively.

Constraints \eqref{rSSTP:dcut1:scenario:cuts} are directed cuts ensuring a feasible arborescence in each scenario consisting of first and second-stage arcs.
Moreover, the additional directed cuts \eqref{rSSTP:dcut1:cuts:first:stage} are used to enforce the required first-stage tree.
%
\begin{align}
\model[dc1]{rSSTP}\, 
\min\, &\sum_{a\in A} c_a^0 z_a^0 + \sum_{k\in\cK} p^k \sum_{a\in A} c_a^k z_a^k \nonumber\\
\text{s.t. }  
z^0(\delta^-(S)) &\ge z^0(\delta^-(v)) 
	\hspace{21pt}\forall \emptyset\not=S\subseteq V_r, \forall v\in S \label{rSSTP:dcut1:cuts:first:stage} \\
(z^0 + z^k)(\delta^-(S)) &\ge 1 
	\hphantom{z(\delta^-(v))}\hspace{20pt}\forall k\in\cK, \forall S\subseteq V_r\colon S\cap T_r^k \not=\emptyset \label{rSSTP:dcut1:scenario:cuts}\\
z^0 &\in \{0, 1\}^{|A|} \\
z^{1\dots K} &\in \{0, 1\}^{|A|\cdot K} \label{SSTP:dcut1:yk:integrality}
\end{align}

\begin{lemm}
Formulation $\model[dc1]{rSSTP}$ models the rooted stochastic Steiner tree problem correctly.
\end{lemm}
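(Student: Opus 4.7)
The plan is to prove correctness by exhibiting feasibility-preserving mappings between rSSTP solutions and feasible solutions of $\model[dc1]{rSSTP}$ that preserve (or weakly improve) the objective value, in both directions.

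For the forward direction (rSSTP to IP), given a feasible rSSTP solution $(\tilde E^0, \tilde E^1, \dots, \tilde E^K)$, I would orient the first-stage tree $\tilde E^0$ away from $r$ to obtain an $r$-arborescence $\tilde A^0$, and set $\tilde z^0_a := 1$ iff $a \in \tilde A^0$. For each scenario $k\in\cK$, the undirected subgraph $G[\tilde E^0 \cup \tilde E^k]$ has a connected component containing both $r$ and $T^k$; I would extend $\tilde E^0$ to a spanning tree of that component and orient it away from $r$ to obtain an $r$-arborescence $B^k$ containing $\tilde A^0$, and set $\tilde z^k$ to contain the arcs of $B^k \setminus \tilde A^0$ together with arbitrarily oriented counterparts of the remaining edges of $\tilde E^k$. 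Constraints \eqref{rSSTP:dcut1:cuts:first:stage} are then immediate from the arborescence structure of $\tilde A^0$ (any $v$ with $\tilde z^0(\delta^-(v)) = 1$ lies on the unique $r$-to-$v$ path in $\tilde A^0$, which crosses every $S \ni v$ with $r \notin S$), and \eqref{rSSTP:dcut1:scenario:cuts} follow from the analogous property of $B^k$; the IP objective matches the rSSTP cost.

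For the reverse direction (IP to rSSTP), given a feasible $(\tilde z^0, \tilde z^{1\dots K})$, I would define $E^k := \{\{i,j\} \in E : \tilde z^k_{ij} + \tilde z^k_{ji} \ge 1\}$ for $k = 0, 1, \dots, K$. The standard directed-cut argument applied to \eqref{rSSTP:dcut1:scenario:cuts} shows that $G[E^0 \cup E^k]$ connects $T^k$ for every $k$. Once I establish that $E^0$ lies entirely in the connected component of $r$ in $G[E^0]$, I can take any spanning tree $\hat E^0 \subseteq E^0$ of that component as the rSSTP first-stage tree; connectivity of $G[\hat E^0 \cup E^k]$ to $T^k$ is preserved because $\hat E^0$ has the same vertex-partition into components as $E^0$, and the rSSTP cost is at most the IP cost by non-negativity of edge costs and the fact that each undirected edge is charged at most as often in the rSSTP as in the IP.

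The main obstacle is this connectivity claim on $E^0$, since \eqref{rSSTP:dcut1:cuts:first:stage} does not literally force $\tilde z^0$ to encode an arborescence. I plan to argue by contradiction: let $R \subseteq V$ denote the set of vertices reachable from $r$ in the directed graph $\{(i,j) : \tilde z^0_{ij} \ge 1\}$, and suppose some $v \in V_r \setminus R$ had $\tilde z^0(\delta^-(v)) \ge 1$. Taking $S := V_r \setminus R$ gives $v \in S$, yet any arc from $V \setminus S = R \cup \{r\}$ into $S$ lying in $\tilde z^0$ would put its head into $R$, contradicting the head lying in $S$. Hence $\tilde z^0(\delta^-(S)) = 0 < \tilde z^0(\delta^-(v))$, violating \eqref{rSSTP:dcut1:cuts:first:stage}. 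Thus every head of a first-stage arc is in $R$, and any tail of such an arc is undirectedly adjacent to a vertex of $R$, so the entire undirected support $E^0$ lies in the $r$-component of $G[E^0]$.
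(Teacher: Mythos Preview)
Your proof is correct and follows the same two-direction mapping structure as the paper's proof. The forward direction is essentially identical. For the reverse direction, however, the paper is much terser: it simply asserts that constraints~\eqref{rSSTP:dcut1:cuts:first:stage} ``plus the objective function'' ensure that the first stage of an \emph{optimum} IP solution is a tree rooted at~$r$, without spelling out how. You instead work with an arbitrary feasible integer solution and make the role of constraints~\eqref{rSSTP:dcut1:cuts:first:stage} explicit: your reachability argument with $S:=V_r\setminus R$ shows that every edge in the undirected support of~$z^0$ lies in the $r$-component of $G[E^0]$, after which passing to a spanning tree and invoking non-negativity of costs gives an rSSTP solution of no greater cost. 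This buys you a slightly stronger statement (any feasible IP solution, not just an optimal one, maps to an rSSTP solution with weakly smaller cost) and makes transparent exactly which constraint enforces the first-stage connectivity, at the price of a longer argument.
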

\begin{proof}
Let $\tilde E^0, \tilde E^1, \dots, \tilde E^K$ describe an optimum rSSTP solution.
Since $\tilde E^0$ induces a tree the edges can be oriented from the root $r$ outwards. 
Then, it is clear that for each scenario $k\in\cK$ the edge set $\tilde E^k$ can be oriented such that $\tilde E^0\cup \tilde E^k$ contains an arborescence with directed paths from $r$ to each terminal. 
This orienting procedure gives a solution to $\model[dc1]{rSSTP}$.

On the other hand, an optimum solution to $\model[dc1]{rSSTP}$ guarantees that every terminal is reachable by a directed path from the root node due to constraints \eqref{rSSTP:dcut1:scenario:cuts}.
Moreover, constraints \eqref{rSSTP:dcut1:cuts:first:stage} plus the objective function ensure that the first stage is a tree rooted at $r$.
Hence, the related undirected edges yield a feasible solution to the rSSTP.
\qed
\end{proof}

It is  possible to use the same idea leading to the semi-directed formulation $\model[sdc2]{SSTP}$ for another directed formulation for the rSSTP.
The variable identifier  for the first-stage arcs is $z^0$ and the arc variables for the $K$ scenarios are $y^{1\dots K}$. 
Again, we use identifier $y$ due to the different meaning: scenario arcs already contain selected first-stage arcs.
\begin{align}
\model[dc2]{rSSTP}\, 
\min\, &\sum_{a\in A} c_a^0 z_a^0 + \sum_{k\in\cK} p^k \sum_{a\in A} c_a^k (y_a^k - z_a^0) \nonumber\\
\text{s.t. }
z^0(\delta^-(S)) &\ge z^0(\delta^-(v)) 
	\hphantom{1z_{ij^0}}\forall \emptyset\not=S\subseteq V_r, \forall v\in S \label{rSSTP:dcut2:cuts:first:stage} \\
y^k(\delta^-(S)) &\ge 1 
	\hphantom{z_{ij^0}z^0(\delta^-(v))}\forall k\in\cK, \forall S\subseteq V_r\colon S\cap T_r^k \not=\emptyset \label{rSSTP:dcut2:scenario:cuts}\\
y_{ij}^k &\ge z_{ij}^0 
	\hphantom{z_0z^0(\delta^-(v))}\forall k\in\cK, \forall (i,j)\in A \label{rSSTP:dcut2:capacity:first:second:stage} \\
z^0 &\in \{0, 1\}^{|A|}  \label{rSSTP:dcut2:z0:integrality}\\
y^{1\dots K} &\in \{0, 1\}^{|A|\cdot K} \label{rSSTP:dcut2:yk:integrality}
\end{align}

Constraints \eqref{rSSTP:dcut2:cuts:first:stage} are identical to constraints \eqref{rSSTP:dcut1:cuts:first:stage}  in $\model[dc1]{rSSTP}$ and model the first-stage tree. 
Capacity constraints \eqref{rSSTP:dcut2:capacity:first:second:stage} enforce the selection of used first-stage arcs in each scenario.
Again, the objective function contains a corrective term for the additional cost.  
Then, the directed cuts \eqref{rSSTP:dcut2:scenario:cuts} in the scenarios contain only variables $y$.

\begin{observation}
Formulation $\model[dc2]{rSSTP}$ models the rooted stochastic Steiner tree problem correctly. 
\end{observation}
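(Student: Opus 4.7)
The plan is to prove correctness by the usual two directions, following the template of the correctness proofs for $\model[dc1]{rSSTP}$ and $\model[sdc2]{SSTP}$, adapted to the capacity-coupled variables and the corrected objective.

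For the forward direction, I would start from an optimum rSSTP solution $\tilde E^0, \tilde E^1, \ldots, \tilde E^K$ (which at optimality may be assumed to satisfy $\tilde E^0 \cap \tilde E^k = \emptyset$), orient the first-stage tree $\tilde E^0$ outward from $r$ to obtain $\tilde z^0$, and for each scenario $k \in \cK$ construct an $r$-arborescence on $\tilde E^0 \cup \tilde E^k$ by a BFS from $r$ that orients every edge of $\tilde E^0$ outward (consistently with $\tilde z^0$) and every remaining scenario edge along its tree path from $r$; this yields $\tilde y^k$. Then \eqref{rSSTP:dcut2:cuts:first:stage} holds because $\tilde z^0$ is itself an $r$-arborescence, \eqref{rSSTP:dcut2:scenario:cuts} holds because $\tilde y^k$ reaches every terminal, and \eqref{rSSTP:dcut2:capacity:first:second:stage} holds by construction since every first-stage arc is reused in $\tilde y^k$ with the same orientation. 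Expanding the objective, the corrective term $-c_a^k \tilde z^0_a$ exactly cancels the cost of first-stage arcs that also appear in $\tilde y^k$, so the total equals the rSSTP cost of $\tilde E^0, \tilde E^1, \ldots, \tilde E^K$.

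For the backward direction, given an optimum integer solution $(\tilde z^0, \tilde y^{1\ldots K})$, constraints \eqref{rSSTP:dcut2:cuts:first:stage} combined with minimization force $\tilde z^0$ to describe an $r$-rooted tree; this is exactly the argument used in the proof for $\model[dc1]{rSSTP}$, which applies verbatim because the first-stage constraint is literally the same and because the capacity constraint \eqref{rSSTP:dcut2:capacity:first:second:stage} only lower-bounds $\tilde y^k$, so dropping a redundant first-stage arc never violates feasibility. Constraints \eqref{rSSTP:dcut2:scenario:cuts} then guarantee that $\tilde y^k$ contains an $r$-arborescence reaching every vertex of $T_r^k$. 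Setting $\tilde E^0 := \{\{i,j\} : \tilde z^0_{ij} = 1\}$ and $\tilde E^k := \{\{i,j\} : \tilde y^k_{ij} = 1 \lor \tilde y^k_{ji} = 1\} \setminus \tilde E^0$ yields a feasible rSSTP solution, and the same cancellation in the corrected objective preserves the cost.

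I expect the main obstacle to be the forward step, where the outward orientation of $\tilde E^0$ must be shown to extend consistently to an $r$-arborescence of $\tilde E^0 \cup \tilde E^k$, as demanded by the capacity constraint. This is the only place where the coupling between stages plays a nontrivial role, and it is resolved by observing that an outward-rooted tree at $r$ together with any additional undirected edges spanning the terminals always admits such an extension via a rooted BFS that prefers tree edges when building scenario layers. Everything else is bookkeeping identical to the arguments already established for $\model[dc1]{rSSTP}$ and $\model[sdc2]{SSTP}$, which is why the claim is stated only as an observation.
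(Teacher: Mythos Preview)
The paper gives no proof for this observation; it is left as immediate from the correctness arguments already established for $\model[dc1]{rSSTP}$ and $\model[sdc2]{SSTP}$. Your two-direction argument makes exactly that reasoning explicit and is correct, at the same level of rigor as those earlier lemmas—the forward direction reuses the orientation step from the $\model[dc1]{rSSTP}$ proof together with the capacity-coupling idea from $\model[sdc2]{SSTP}$, and the backward direction simply reads off the undirected edges and verifies the corrective term in the objective.
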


The objective function of model $\model[dc2]{rSSTP}$ can be rewritten analogously to the semi-directed formulation. 
We call the resulting formulation $\model[$\text{dc2}^*$]{rSSTP}$ which is equivalent to $\model[dc2]{rSSTP}$ but the change in the objective function matters when a decomposition is applied.
\begin{align}
\model[$\text{dc2}^*$]{rSSTP}\, 
\min\, &\sum_{a\in A} (c_a^0 - c_a^*) z_a^0 + \sum_{k\in\cK} p^k \sum_{a\in A} c_a^k y_a^k \nonumber \\
\text{s.t. } (z^0, y^{1\dots K}) &\text{ satisfies \eqref{rSSTP:dcut2:cuts:first:stage}--\eqref{rSSTP:dcut2:yk:integrality}} \nonumber
\end{align}

If $c_a^0 < c_a^* := \sum_{k\in\cK} p^k c_a^k $ holds for all arcs $a\in A$ we can again relax the integrality restrictions on the first-stage variables without losing overall integrality.
Let $(\text{rSSTP}_\text{dc2}^{\text{rel}:z0})$ 
denote formulation $\model[dc2]{rSSTP}$ with the integrality constraint \eqref{rSSTP:dcut2:z0:integrality} being relaxed to $z^0 \in [0,1]^{|A|}$.

\begin{theorem}
\label{sstp:dcut2:integrality}
If it holds $c_a^0 < c_a^*, \forall a\in A$, the optimum solution to $(\text{rSSTP}_\text{dc2}^{\text{rel}:z0})$  is integer.
\end{theorem}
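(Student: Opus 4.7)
The plan mirrors Lemma~\ref{sstp:sdcut2:integrality}, but the first-stage tree constraints \eqref{rSSTP:dcut2:cuts:first:stage} couple the $z^0$-coordinates and forbid the one-variable-at-a-time rounding used there; a polyhedral argument will be required instead.

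First, I would fix an arbitrary optimum $(\tilde z^0, \tilde y^{1\dots K})$ of $(\text{rSSTP}_\text{dc2}^{\text{rel}:z0})$. Since only $z^0$ is relaxed, $\tilde y^{1\dots K}$ is already $0/1$, and the capacity constraints \eqref{rSSTP:dcut2:capacity:first:second:stage} together with integrality of $\tilde y^{1\dots K}$ force $\tilde z^0_a = 0$ on every arc $a$ with $\tilde y^k_a = 0$ in some scenario. Let $A^y := \{a \in A : \tilde y^k_a = 1\ \forall k \in \cK\}$; then $\tilde z^0_a \in [0,1]$ for $a \in A^y$ and $\tilde z^0_a = 0$ otherwise. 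With $\tilde y^{1\dots K}$ fixed, the residual LP in $z^0$ alone reads
\[
\min \sum_{a \in A^y} (c^0_a - c^*_a)\, z^0_a \quad \text{s.t.}\ \eqref{rSSTP:dcut2:cuts:first:stage},\ z^0_a \in [0,1]\ \forall a \in A^y,\ z^0_a = 0\ \forall a \notin A^y.
\]
Under the hypothesis $c^0_a < c^*_a$ every coefficient is strictly negative, so this is equivalently a maximization of a strictly positively weighted linear functional over the polytope $P_{A^y}$ defined by the above constraints.

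The crux is to show $P_{A^y}$ is integral, whence the LP attains its optimum at a $0/1$ vertex $\bar z^0$ and replacing $\tilde z^0$ by $\bar z^0$ yields an integer optimum of $(\text{rSSTP}_\text{dc2}^{\text{rel}:z0})$ with the same objective value. I would argue that $\{z \in [0,1]^{|A|} : \text{\eqref{rSSTP:dcut2:cuts:first:stage}}\}$ is an $r$-rooted fractional branching polytope whose integrality is established by the classical arborescence theory---for instance through an uncrossing/laminarization argument on the tight cut constraints at an extreme point, or by reduction to Edmonds' min-cost arborescence LP. Intersecting with the axis-parallel hyperplanes $z_a = 0$ for $a \notin A^y$ preserves integrality.

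The main technical obstacle is precisely this integrality claim: unlike the standard spanning-arborescence polytope, the bound $z(\delta^-(v)) \le 1$ is not imposed explicitly, so I would have to either show that this bound holds at every vertex (which should follow from a perturbation/exchange argument exploiting the strict positivity of the reduced weights $c^*_a - c^0_a$) or invoke a polyhedral integrality result that directly covers the cut system in this unconstrained form.
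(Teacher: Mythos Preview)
Your reduction is sound up to the point where you fix $\tilde y^{1\dots K}$ and pass to the residual LP in $z^0$: optimality of $(\tilde z^0,\tilde y^{1\dots K})$ indeed forces $\tilde z^0$ to optimise that residual LP, and integrality of $P_{A^y}$ would finish the proof. The gap is that this integrality is neither proved nor a citable classical fact. The system \eqref{rSSTP:dcut2:cuts:first:stage} is \emph{not} Edmonds' arborescence system: there the right-hand side is a constant (or, more generally, a fixed supermodular set function independent of $z$), whereas here the right-hand side $z^0(\delta^-(v))$ involves the same variable vector as the left-hand side. The standard uncrossing/TDI machinery relies precisely on that independence, so neither your ``reduction to Edmonds' min-cost arborescence LP'' nor a routine laminarization argument applies. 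Restricting to $A^y$ does not obviously help either: every arc with $\tilde z^0_a>0$ is forced into all $\tilde y^k$ by \eqref{rSSTP:dcut2:capacity:first:second:stage}, so $A^y$ can have several in-arcs at a single vertex and need not look like (a subset of) a branching. You flag this as ``the main technical obstacle'' but do not resolve it; as written, the proof is incomplete.

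The paper avoids the polyhedral question entirely. It gives a direct improvement argument: choose a fractional arc $\alpha=(i,j)$ that is either incident to $r$ or preceded by a fully selected directed $r$--$i$ path in $\tilde z^0$, and then round $\tilde z^0_\alpha$ (in one subcase simultaneously zeroing out the competing in-arcs of $j$ \emph{and} the corresponding $\tilde y^k$-entries) to produce a strictly better feasible solution. Note that this last subcase leaves your residual polytope for fixed $\tilde y$, so the paper's technique neither proves nor needs the integrality of $P_{A^y}$. If you want to salvage the polyhedral route you would have to supply an independent integrality proof for \eqref{rSSTP:dcut2:cuts:first:stage} over $[0,1]^{|A|}$; that appears to be at least as much work as the paper's combinatorial argument.
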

\begin{proof}
Let $(\tilde z^0, \tilde y^{1\dots K})$ denote an optimum solution to $(\text{rSSTP}_\text{dc2}^{\text{rel}:z0})$  that is non-integer. 
Now consider an arc $\alpha\in A$ with $0< \tilde z_{\alpha}^0 < 1$ defined as follows.
If there exists a fractional arc $(r,j)$ we set $\alpha := (r,j)$.
Otherwise, we set $\alpha:=(i,j)$ such that the directed path $P$ from the root $r$ to vertex $i$ consists only of selected arcs, i.e., $\tilde z_a^0 = 1, \forall a\in P$.
Notice that arc $\alpha$ is well-defined due to constraints \eqref{rSSTP:dcut2:cuts:first:stage}.

We consider three main cases. 
In each case we construct a feasible solution $(\hat z^0, \hat y^{1\dots K})$ with a better objective value than by $(\tilde z^0, \tilde y^{1\dots K})$.
We always start with the solution $(\hat z^0, \hat y^{1\dots K})$ with $\hat z^0 := \tilde z^0, \hat y^{1\dots K} := \tilde y^{1\dots K}$ and describe the necessary modifications.

{\em Case 1:} $\alpha = (i, r)$. 
Since $\alpha$ is an ingoing arc of the root $r$ it is not contained in any directed cut.
Hence, setting $\hat z_\alpha^0 := 0$ and $\hat y_\alpha^k := 0, \forall k\in\cK$, gives a better solution.

{\em Case 2:} $\alpha=(r, j)$.
In this case set $\hat z_\alpha^0 := 1$. 
First, notice that the objective value improves since 
the term in the objective function with respect to arc $\alpha$ is
$
c_\alpha^0 \tilde z_\alpha^0 + \sum_{k\in\cK} p^k c_\alpha^k (\tilde y_\alpha^k - \tilde z_\alpha^0) 
= c_\alpha^0 \tilde z_\alpha^0 + \sum_{k\in\cK} p^k c_\alpha^k (1-\tilde z_\alpha^0) 
=  (c_\alpha^0 - c_\alpha^*) \tilde z_\alpha^0 + c_\alpha^*
$
and $c_\alpha^0 < c_\alpha^*$. 

Second, we argue that the solution $(\hat z^0, \hat y^{1\dots K})$ is feasible.
Since $\hat y^{1\dots K}=\tilde y^{1\dots K}$ we do not need to consider constraints \eqref{rSSTP:dcut2:scenario:cuts}.
Constraints \eqref{rSSTP:dcut2:cuts:first:stage} are only crucial for vertex $j$ since for all other vertices the right-hand side does not change and the left-hand side does not decrease.
For vertex $j$ notice that $z_\alpha^0$ is contained in the left-hand and in the right-hand side of any constraint;
hence, the constraints are still satisfied.  
Constraint \eqref{rSSTP:dcut2:capacity:first:second:stage} is also only interesting for arc $\alpha$;
but since $\tilde z_\alpha^0 > 0$ it holds $\hat y_\alpha^k = \tilde y_\alpha^k = 1$ and the constraint is also still satisfied.

{\em Case 3:} $\alpha=(i, j)$ with $i\not=r, j\not=r$. 
Let $\cL := \{\ell\in V \mid (\ell, j)\in A, \ell \not= i,  \tilde z_{\ell j}^0 > 0 \}$, i.e., $\cL$ is the set of vertices $\ell\not=i$ with a (fractionally) selected arc $(\ell, j)$.

{\em Case 3.1:}  $\cL = \emptyset$.
Hence, arc $\alpha$ is the only ingoing arc of $j$ with $\tilde z_{\cdot,j}^0 > 0$.
In this case we set  $\hat z_\alpha^0 := 1$.

The arguments are similar to Case 2. 
Again, the objective value improves and constraints \eqref{rSSTP:dcut2:scenario:cuts} and \eqref{rSSTP:dcut2:capacity:first:second:stage} are still satisfied. 
Constraints \eqref{rSSTP:dcut2:cuts:first:stage} are again only crucial for vertex $j$ 
and are satisfied due to the properties of arc $\alpha$: 
Recall that we set $\alpha$ such that the directed path $P$ from $r$ to $i$ consists of arcs $a$ with $\tilde z_a^0 = 1, \forall a\in P$. 
Hence, any cut $S$ with $j\in S, r\not\in S$ satisfies $\hat z^0(\delta^-(S)) \ge \tilde z^0(\delta^-(S)) \ge 1 = \hat z^0(\delta^-(j))$.

{\em Case 3.2:}  $\cL \not= \emptyset$. 
Since $\cL \not= \emptyset$ there exists at least one arc $(\ell, j)$ with $\tilde z^0_{\ell j} > 0, \ell \not=i$.

Hence, due to capacity constraints \eqref{rSSTP:dcut2:capacity:first:second:stage} it holds 
$\tilde y^k(\delta^-(j)) = 1 + |\cL| \ge 2$ in any scenario $k\in\cK$.
Since directed cuts have a left-hand side of 1 it is obvious that this solution is non-optimal. 

Now, set $\hat z_{\alpha}^0 := 1$, 
$\hat z_{\ell j}^0 := 0, \forall \ell\in\cL$,
and $\hat y_{\ell j}^k := 0, \forall \ell\in\cL, \forall k\in\cK$.
First, we argue that this solution has a better objective value and afterwards, we discuss its feasibility. 

As discussed in Case 2 increasing $\hat z_{\alpha}^0$ leads to a decrease of the objective value.
Moreover, deleting arcs from the solution by setting $\hat z_{\ell j}^0 := 0, \forall \ell\in\cL$,
and $\hat y_{\ell j}^k := 0, \forall \ell\in\cL, \forall k\in\cK$, improves the objective, too.
Hence, the newly constructed solution has a better objective value.

To show the feasibility of this solution we consider the constraints one by one. 
Capacity constraints \eqref{rSSTP:dcut2:capacity:first:second:stage} are satisfied by construction. 
The directed cuts in the scenarios \eqref{rSSTP:dcut2:scenario:cuts} are satisfied for every valid cut $S\ni j$ since 
$S$ crosses the path $P$ or arc $\alpha$ where each arc $a\in P\cup\alpha$ has a value $\hat y_a^k = 1, \forall k\in\cK$,  such that
it holds $\hat y^k(\delta^-(S))\geq 1$. 
All other valid cuts $S\not\ni j$ are still satisfied since the arc variables crossing the cuts are not modified.  

Last but not least, we have to consider constraints \eqref{rSSTP:dcut2:cuts:first:stage}; here, the arguments are very similar. 
Consider any valid cut $S$ for constraint \eqref{rSSTP:dcut2:cuts:first:stage}. 
If $j\not\in S$ the constraint is still satisfied since the related arc variables are unchanged. 
In case $j\in S$ the cut $S$ crosses $P\cup \alpha$ such that (i) $\hat z^0(\delta^-(S))\geq 1$. 
Since arc costs are non-negative and the right-hand side of the directed cuts is 1 any optimum solution satisfies (ii) $z^0(\delta^-(v)) \leq 1, \forall v\in V$. 
We modified $z^0$ such that (iii) $\hat z^0(\delta^-(v)) = 1, \forall v\in V$. 
Combining (i)--(iii) shows that constraints \eqref{rSSTP:dcut2:cuts:first:stage} are satisfied. 
\qed
\end{proof}

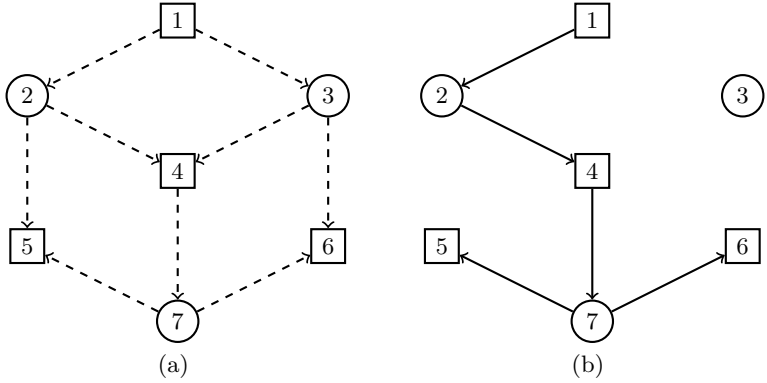
\begin{figure}[tb]
\centering
\subfigure[]{
\begin{tikzpicture}[scale=1.0]

\node[rectangle] (1) at (11,11) {$1$};
\node[circle] (2) at (9,10) {$2$};
\node[circle] (3) at (13,10) {$3$};
\node[rectangle] (4) at (11,9) {$4$};
\node[rectangle] (5) at (9,8) {$5$};
\node[rectangle] (6) at (13,8) {$6$};
\node[circle] (7) at (11,7) {$7$};

\draw[arcr,dashed] (1) -- node[right] {$$} node[left] {} (2);
\draw[arcr,dashed] (1) -- node[left] {$$} node[right] {} (3);

\draw[arcr,dashed] (2) -- node[above] {$$} node[below] {} (4);
\draw[arcr,dashed] (2) -- node[above] {$$} node[below] {} (5);

\draw[arcr,dashed] (3) -- node[above] {$$} node[below] {} (4);
\draw[arcr,dashed] (3) -- node[above] {$$} node[below] {} (6);

\draw[arcr,dashed] (4) -- node[above] {$$} node[below] {} (7);

\draw[arcr,dashed] (7) -- node[above] {$$} node[below] {} (5);
\draw[arcr,dashed] (7) -- node[above] {$$} node[below] {} (6);

\end{tikzpicture}
}
\hspace{0.5cm}
\subfigure[]{
\begin{tikzpicture}[scale=1.0]

\node[rectangle] (1) at (11,11) {$1$};
\node[circle] (2) at (9,10) {$2$};
\node[circle] (3) at (13,10) {$3$};
\node[rectangle] (4) at (11,9) {$4$};
\node[rectangle] (5) at (9,8) {$5$};
\node[rectangle] (6) at (13,8) {$6$};
\node[circle] (7) at (11,7) {$7$};

\draw[arcr] (1) -- node[right] {$$} node[left] {} (2);

\draw[arcr] (2) -- node[above] {$$} node[below] {} (4);

\draw[arcr] (4) -- node[above] {$$} node[below] {} (7);

\draw[arcr] (7) -- node[above] {$$} node[below] {} (5);
\draw[arcr] (7) -- node[above] {$$} node[below] {} (6);

\end{tikzpicture}
}
\caption{
Instance for the STP where the directed cut formulation has an integrality gap of $10/9$.  
All edge costs are 1 and terminals are drawn as rectangles.
(a) shows the optimum fractional solution (dashed arcs are set to 0.5) whereas (b) depicts an optimum integer solution.
This graph can be used to construct an rSSTP-instance where the optimum solution to $(\text{rSSTP}_\text{dc1}^{\text{rel}:z0})$ is fractional but 
$(\text{rSSTP}_\text{dc2}^{\text{rel}:z0})$ is integer, cf.\ text.
}
\label{figure:sstp:rsstp:first:stage:integer}
\end{figure}

We like to shortly revisit the first directed cut formulation $\model[dc1]{rSSTP}$ and show that $(\text{rSSTP}_\text{dc1}^{\text{rel}:z0})$ does not have the latter property; 
let $(\text{rSSTP}_\text{dc1}^{\text{rel}:z0})$ denote formulation $\model[dc1]{rSSTP}$ with relaxed first-stage variables $z^0\in[0,1]^{|A|}$.
An example is given by Figure \ref{figure:sstp:rsstp:first:stage:integer} (a).
The corresponding undirected graph depicts a classical instance for the deterministic STP 
(cf.\ e.g., \cite{PolzinDaneshmand2009ApporachesSTP}) 
where the directed cut formulation has an integrality gap---here it is $10/9$.
Now, consider an rSSTP-instance on that graph that contains one scenario with the four terminals $\{1, 4, 5, 6\}$ and with vertex 1 being the root $r$. 
Moreover, let the cost in the first stage be 1 for each edge and in the scenario 2 for each edge such that $(\text{rSSTP}_\text{dc1}^{\text{rel}:z0})$ connects all terminals already in the first stage.
Figure \ref{figure:sstp:rsstp:first:stage:integer} (a) gives the optimum solution with cost $4.5$ for model $(\text{rSSTP}_\text{dc1}^{\text{rel}:z0})$ where each dashed arc $a$ is set to $z^0_a := 0.5$;
moreover, $z^{1\dots K} := \mathbf{0}$ is integer. 
Figure \ref{figure:sstp:rsstp:first:stage:integer} (b) depicts the first stage of an optimum solution with cost 5 for the described rSSTP-instance which is also the optimum solution to $(\text{rSSTP}_\text{dc2}^{\text{rel}:z0})$.

\paragraph{Directed flow formulations for the rSSTP.}
We close the discussion on formulations for the rooted stochastic Steiner tree problem by introducing a polynomially sized model.
This formulation is again flow-based.
Compared to the previously introduced flow formulations it requires additional node variables $w_v^0 \in\{0,1\}, \forall v\in V$, and additional first-stage flow variables $f_{ij}^{0,v}, \forall v\in V_r, \forall (i,j)\in A$, for ensuring the first-stage tree.

The description of the formulation is split into several parts for better readability.
First, we introduce the variables. 
The solution is represented by arc variables $z^0$ for the first stage and $y^{1\dots K}$ for the $K$ scenarios. 
We again use capacity constraints to ensure that each first-stage arc is also used in each scenario.
Hence, we have the same identifiers $y^{1\dots K}$ for the second stage.

As for the semi-directed flow formulations we have flow variables $f_{ij}^{k,t}$ for each scenario $k\in\cK$, terminal $t\in T_r^k$, and arc $(i,j)\in A$. 
Moreover, we use the already mentioned flow variables $f_{ij}^{0,v}$ and binary node variables $w_v^0$ for the first stage.
\begin{align}
f^0 &\in [0,1]^{|V_r|\cdot|A|} \label{rsstp:dflow:f0:in:01} \\
f^{1\dots K} &\in [0,1]^{|A|\cdot t_r^*}  \\
z^0 &\in \{0, 1\}^{|A|}\\
w^0 &\in\{0,1\}^{|V_r|} \label{rsstp:dflow:w0:integrality}\\
y^{1\dots K} &\in \{0, 1\}^{|A|\cdot K} \label{rsstp:dflow:integrality:yk}
\end{align}

The constraints which contain first-stage variables are given as follows.
Thereby, $w_v^0=1$ implies that vertex $v$ is contained in the first-stage tree.
In this case a flow of unit one needs to be send from the root to this vertex.
This is ensured by the classical flow conservation constraints \eqref{rsstp:dflow:first:stage:low:conservation}; here with the right-hand side $w_i^0$ and $-w_i^0$, respectively.
Constraints \eqref{rsstp:dflow:first:stage:variables}  ensure the correct assignment of the node variables.
\begin{align}
z_{ij}^0 &\ge f_{ij}^{0,v} 
	\hphantom{z^0(\delta^-(v))  } \forall v\in V_r, \forall (i,j)\in A \label{rsstp:dflow:first:stage:capacity:arcs:flow} \\
w_v^0 &\ge z^0(\delta^-(v))  
	\hphantom{f_{ij}^{0,v} } \forall v\in V_r \label{rsstp:dflow:first:stage:variables}\\ 
\sum_{(h,i) \in A} f_{hi}^{0,v} - \sum_{(i,j) \in A} f_{ij}^{0,v} &=
\left.\begin{cases}
w_v^0, & \text{if } i = r\\
-w_v^0, & \text{if } i = v\\
0, & \text{otherwise}
\end{cases} \right\}
\, \forall v\in V_r, \forall i \in V \label{rsstp:dflow:first:stage:low:conservation}
\end{align}

Again, we use capacity constraints \eqref{rsstp:dflow:capacity:constraint:first:second:stage} to ensure that each first-stage arc is also used in each scenario.
These constraints link the first and second stage and they are the only constraints using both first and second-stage variables.
\begin{align}
y_{ij}^k &\ge z_{ij}^0  \qquad \forall k \in\cK, \forall (i,j)\in A \label{rsstp:dflow:capacity:constraint:first:second:stage}
\end{align}

The remaining constraints are identical to the constraints in the semi-directed flow formulation. 
They ensure that all arcs used for routing flow are also purchased in the objective function and that the constructed flow is valid.
\begin{align}
y_{ij}^k &\ge f_{ij}^{k,t}  \qquad \forall k \in\cK, \forall (i,j)\in A, \forall t\in T^k_r \label{rsstp:dflow:capacity:constraint:scenario}\\
\sum_{(h,i) \in A} f_{hi}^{k,t} - \sum_{(i, j) \in A} f_{ij}^{k,t} &=
\left.\begin{cases}
1, & \text{if } i = r\\
-1, & \text{if } i = t\\
0, & \text{otherwise}
\end{cases} \right\}
\begin{array}{l}
\forall k\in\cK, \forall t\in T^k_r,\\  \forall i \in V 
\end{array}
\label{rsstp:dflow:flow:conservation}
\end{align}

Finally, the directed flow-based formulation reads as follows:
\begin{align}
\model[df]{rSSTP}\, 
\min\, \sum_{a\in A}c_{a}^0 z_{a}^0 +& \sum_{k\in\cK} p^k \sum_{a\in A}c_{a}^k (y_{a}^k - z_{a}^0) \nonumber \\
\text{s.t. }
(z^0, y^{1\dots K}, w^0, f) \text{ satisfies }&\text{\eqref{rsstp:dflow:f0:in:01}--\eqref{rsstp:dflow:flow:conservation}}\nonumber
\end{align}

\begin{observation}
Formulation $\model[df]{rSSTP}$ models the rooted stochastic Steiner tree problem correctly.
\end{observation}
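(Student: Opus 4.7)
The plan is to prove correctness by exhibiting objective-preserving translations in both directions between feasible rSSTP solutions and feasible solutions of $\model[df]{rSSTP}$. The scenario-connectivity portion of the argument will mirror the observation for $\model[sdf]{SSTP}$; the new ingredient is that the node variables $w^0$ together with the first-stage flows $f^0$ must correctly enforce that the first-stage arcs form a tree rooted at $r$.

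For the forward direction, I will start with an optimal rSSTP solution $(\tilde E^0, \tilde E^1, \ldots, \tilde E^K)$ where w.l.o.g.\ $\tilde E^k \cap \tilde E^0 = \emptyset$. I will orient the edges of the first-stage tree outward from $r$, set $z_a^0 := 1$ on each oriented arc, $w_v^0 := 1$ on exactly the vertices of this tree, and define $f^{0,v}$ as one unit of flow along the unique directed path from $r$ to $v$. For each scenario $k$, I will extend the first-stage orientation to an arborescence of $\tilde E^0 \cup \tilde E^k$ rooted at $r$ spanning $T^k$, set $y^k$ to its arcs, and define $f^{k,t}$ as the unit flow along the path from $r$ to $t$. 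All constraints \eqref{rsstp:dflow:f0:in:01}--\eqref{rsstp:dflow:flow:conservation} will then follow directly from the construction, and the objective will evaluate to $\sum_{e \in \tilde E^0} c_e^0 + \sum_{k \in \cK} p^k \sum_{e \in \tilde E^k} c_e^k$ since each first-stage arc contributes $y_a^k - z_a^0 = 0$ in every scenario.

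For the reverse direction, given an optimal $(\tilde z^0, \tilde w^0, \tilde y^{1\dots K}, \tilde f)$, I will take $\tilde E^0$ to be the undirected edges underlying the support of $\tilde z^0$ and $\tilde E^k$ the remaining undirected edges underlying the support of $\tilde y^k$. Scenario connectivity of $T^k$ in $G[\tilde E^0 \cup \tilde E^k]$ will follow by decomposing each unit flow $\tilde f^{k,t}$ into a directed $r$-to-$t$ path through arcs in the support of $\tilde y^k$ via \eqref{rsstp:dflow:capacity:constraint:scenario}, combined with $\tilde z^0 \le \tilde y^k$ from \eqref{rsstp:dflow:capacity:constraint:first:second:stage}. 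The main obstacle will be proving that $G[\tilde E^0]$ is a tree containing $r$, which I plan to establish by two complementary observations: first, constraint \eqref{rsstp:dflow:first:stage:variables} with $\tilde w_v^0 \le 1$ forces $\tilde z^0(\delta^-(v)) \le 1$ for every $v \in V_r$, so each non-root vertex has at most one incoming first-stage arc; second, the capacity constraint \eqref{rsstp:dflow:first:stage:capacity:arcs:flow} and the first-stage flow conservation \eqref{rsstp:dflow:first:stage:low:conservation} together guarantee a directed first-stage path from $r$ to every $v$ with $\tilde w_v^0 = 1$, and thus to every vertex receiving an incoming first-stage arc by \eqref{rsstp:dflow:first:stage:variables}. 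A directed cycle in the support of $\tilde z^0$ (which by optimality does not pass through $r$, since arcs incoming to $r$ are wasteful and can be removed without violating feasibility) would contradict this reachability, because iterating the unique incoming arc backward along the reachability path would force a vertex to repeat. Hence the support of $\tilde z^0$ is an arborescence rooted at $r$ and $\tilde E^0$ is the required tree; the objective equivalence is then verified as in the forward direction.
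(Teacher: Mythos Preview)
The paper records this statement as an \emph{observation} and supplies no proof; it is left as evident from the already-established correctness of $\model[dc2]{rSSTP}$ together with the standard max-flow/min-cut correspondence used throughout Section~\ref{section:ILP:formulations}. Your plan fills in that implicit argument explicitly, and it is essentially correct.

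Two minor points on the reverse direction. First, ``objective-preserving'' is slightly too strong: the rSSTP solution you extract may cost strictly \emph{less} than the ILP solution you start from (for instance if both orientations of some edge lie in the support of $\tilde y^k$, or if the reverse arc of some $e\in\tilde E^0$ also has $\tilde y^k_{ji}=1$). This is harmless---an inequality in each direction already yields equality of the optima---so only the wording needs adjustment. Second, the step ``by optimality no first-stage arc enters $r$'' deserves one line of justification: setting $\tilde z_{ir}^0:=0$ and $\tilde y_{ir}^k:=0$ for all $k$, and deleting any flow along $(i,r)$ from the (cycle-free) decompositions of $\tilde f^{0,\cdot}$ and $\tilde f^{k,\cdot}$, preserves every constraint \eqref{rsstp:dflow:first:stage:capacity:arcs:flow}--\eqref{rsstp:dflow:flow:conservation} and does not raise the objective. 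With that in place, your cycle-chasing argument (tracing the unique incoming arc backward around a putative cycle disjoint from $r$ until a vertex repeats on the $r$-path) is sound, and the conclusion that the support of $\tilde z^0$ is an $r$-rooted arborescence follows.
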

%

\subsection{Additional constraints}
\label{section:additional:constraints}

It is possible to expand the formulations for the (r)SSTP by further inequalities which are valid for the deterministic STP as described by, e.g., 
\cite{KochMartinSTP} and \cite{PolzinDaneshmand2001}.

Although the following constraints do not strengthen the models they are all valid for any scenario $k\in\cK$. 
Here, we use variables $y^k$ but the constraints can be used for $\model[sdc1]{SSTP}$ and $\model[dc1]{rSSTP}$ as well. 
\begin{alignat}{2}
y^k_{ij} + y^k_{ji} &\le 1 \quad\quad&&\forall e=\{i,j\}\in E \label{SSTP:sdcut:SEC2}\\
y^k(\delta^-(r^k)) &= 0 \quad\quad&&\label{SSTP:indegree:root:0}\\
y^k(\delta^+(r^k)) &\ge 1 && \label{SSTP:sdcut:outdegree:root} \\
y^k(\delta^-(v)) &= 1 &&\forall v\in T^k_r \label{SSTP:sdcut:indegree:terminal} \\
y^k(\delta^-(v)) &\le 1 &&\forall v\in V_r^k\without T_r^k \label{SSTP:indegree:nonterminal:leq:1}
\end{alignat}

By using straight-forward modifications constraints  
\eqref{SSTP:sdcut:SEC2}, \eqref{SSTP:indegree:root:0}, and \eqref{SSTP:indegree:nonterminal:leq:1} are also valid for the first stage of the rSSTP models. 

\paragraph{Flow-balance constraints.} 
These constraints are deviated from the flow-conservation condition and relate the in- and outdegree of non-terminal vertices. 
E.g., \cite{PolzinDaneshmand2001} showed that constraints \eqref{flow:balance:constraints} strengthen the directed cut- and flow-based formulations of the STP.
\begin{align}
z(\delta^+(v)) \ge z(\delta^-(v)) \hspace{25pt}\forall v \in V\wo T \label{flow:balance:constraints}
\end{align}

However, these constraints are not valid for the stochastic models. 
Since first-stage solutions might contain irrelevant parts w.r.t.\ one particular scenario $k$, i.e., there might be parts of the first-stage solution that can be pruned without violating the feasibility of the solution in scenario $k$, these constraints would enforce the selection of unnecessary arcs. 
Notice that this holds for both the semi-directed and directed formulations (in the first and second stage, too).

\section{Strength of the formulations}
\label{section:sstp:ip:formulations:strength}

This section provides a comparison of the introduced formulations from a polyhedral point of view. 
In the first part we consider the undirected (Section \ref{section:sstp:ip:formulations:strength:undirected}) 
and semi-directed formulations (Section \ref{section:sstp:ip:formulations:strength:semi:directed}) 
for the SSTP and Section \ref{sstp:ip:formulations:strength:directed} focusses on the directed models for the rooted version.

\subsection{Undirected formulations for the SSTP}
\label{section:sstp:ip:formulations:strength:undirected}

We start by comparing the undirected formulations based on cuts and flows, respectively.
The related polytopes of the relaxed formulations are denoted by
\begin{align}
\polytope[uc]{SSTP} &= \left\{x^{0\dots K} \in [0,1]^{|E|\cdot(K+1)} \midresize x^{0\dots K} \text{ satisfies \eqref{SSTP:ucut:undirected:cuts}}  \right\} \nonumber\\ 
\polytope[uf]{SSTP} &= \left\{(x^{0\dots K}, f) \in [0,1]^{|E|\cdot(K+1)}\times [0,1]^{|A|\cdot t_r^*} \right|\nonumber\\
&\hphantom{= \left\{x^{0\dots K} \in [0,1]^{|E|\cdot(K+1)} \right|}\left.\,(x^{0\dots K}, f) \text{ satisfies \eqref{sstp:uflow:capacity:constraint:scenario}, \eqref{sstp:uflow:flow:conservation}}  \right\}. \nonumber
\end{align}

In order to compare the formulations we project the variables of the flow formulation onto the space of undirected edge variables, i.e.,
\[
\projection{x^{0\dots K}}{\polytope[uf]{SSTP}} = \left\{x^{0\dots K} \midresize
\exists f\colon (x^{0\dots K}, f) \in \polytope[uf]{SSTP}\right\}. \nonumber
\]

As for the undirected cut-based and flow-based formulations of the deterministic STP the two formulations for the SSTP are equivalently strong.

\begin{lemm}
$\projection{x^{0\dots K}}{\polytope[uf]{SSTP}} = \polytope[uc]{SSTP}$.
\end{lemm}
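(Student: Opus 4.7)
The plan is to establish the two inclusions separately, using the standard correspondence between flows and cuts. Throughout, fix a scenario $k\in\cK$ and a terminal $t\in T_r^k$; the flow $f^{k,t}$ routes one unit from $r^k$ to $t$ with capacities given, on each arc, by the corresponding undirected edge variable $x_e^0+x_e^k$.

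For the inclusion $\projection{x^{0\dots K}}{\polytope[uf]{SSTP}}\subseteq \polytope[uc]{SSTP}$, I would take any $(x^{0\dots K},f)\in\polytope[uf]{SSTP}$ and any cut set $S$ with $\emptyset\neq T^k\cap S\neq T^k$. Because $r^k\in T^k$, one of the following holds: (i) $r^k\in S$ and there exists a terminal $t\in T_r^k\setminus S$, or (ii) $r^k\notin S$ and there exists $t\in T_r^k\cap S$. In either case I would sum the flow conservation constraints \eqref{sstp:uflow:flow:conservation} over the vertices of the appropriate side of the cut: internal arcs cancel and the right-hand sides collapse to $\pm 1$, yielding that the net flow across the directed cut separating $r^k$ from $t$ is exactly $1$. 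The capacity constraints \eqref{sstp:uflow:capacity:constraint:scenario} then bound this net flow above by $(x^0+x^k)(\delta(S))$, giving the cut inequality \eqref{SSTP:ucut:undirected:cuts}.

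For the reverse inclusion $\polytope[uc]{SSTP}\subseteq \projection{x^{0\dots K}}{\polytope[uf]{SSTP}}$, I would start from $x^{0\dots K}\in\polytope[uc]{SSTP}$ and, for each pair $(k,t)$ with $k\in\cK$, $t\in T_r^k$, construct $f^{k,t}$ separately by invoking the max-flow/min-cut theorem on the bidirection $\bar G$ with capacity $x_e^0+x_e^k$ on both arcs corresponding to $e$. For any $r^k$--$t$ cut $S$ (with $r^k\in S$, $t\notin S$) the capacity of the directed cut $\delta^-(V\setminus S)$ in $\bar G$ equals $(x^0+x^k)(\delta(S))$, which is $\geq 1$ by \eqref{SSTP:ucut:undirected:cuts}. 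Hence the minimum $r^k$--$t$ cut has value at least $1$, so there exists an $r^k$--$t$ flow of value exactly $1$ respecting the arc capacities; this $f^{k,t}$ satisfies \eqref{sstp:uflow:flow:conservation} and \eqref{sstp:uflow:capacity:constraint:scenario} by construction, and combining over all $(k,t)$ produces the desired $f$.

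The main obstacle — really a bookkeeping point rather than a technical one — is to make the summation of flow-conservation equations over a vertex subset yield cleanly the correct directional cut value despite working on the bidirection; careful case analysis (whether $r^k$ lies on the $S$-side or not, and which terminal witnesses the cut condition) is what makes the first inclusion work. The second inclusion is a direct application of max-flow/min-cut and requires no additional insight beyond noting that in $\bar G$ every undirected $S$-cut corresponds to a directed cut of equal capacity.
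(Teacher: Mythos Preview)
Your proposal is correct and follows the same approach as the paper: both inclusions reduce to the max-flow/min-cut theorem applied independently to each scenario and each terminal (your first inclusion is the weak-duality direction spelled out via summing the flow-conservation equalities, and your second inclusion invokes the theorem directly). The paper's own proof is a one-line appeal to ``max flow $=$ min cut'' per scenario, so you have simply fleshed out the standard details.
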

\begin{proof}
This lemma follows directly from the classical max flow = min cut theorem, applied to each scenario.
If there is a flow of one unit from the root node to each terminal then every cut separating the terminal from the root node is satisfied.
On the other hand, if every undirected cut is satisfied it is easy to find a feasible flow from the root node to every terminal using exactly those edges.
In both models either first- or second-stage edges can be used. 
\qed \end{proof}

\subsection{Semi-directed formulations for the SSTP}
\label{section:sstp:ip:formulations:strength:semi:directed}

Before comparing the formulations we expand the semi-directed cut formulations by \emph{subtour elimination constraints of size two (SEC2)} in the second stage;
constraints \eqref{sstp:sdcut1:additional:sec2} are added to
$\model[sdc1]{SSTP}$ and 
\eqref{sstp:sdcut2:additional:sec2} to
$\model[sdc2]{SSTP}$, respectively:
\begin{align}
z_{ij}^k + z_{ji}^k &\le 1\hspace{25pt} \forall k\in\cK, \forall (i,j)\in A \label{sstp:sdcut1:additional:sec2} \\
y_{ij}^k + y_{ji}^k &\le 1\hspace{25pt} \forall k\in\cK, \forall (i,j)\in A \label{sstp:sdcut2:additional:sec2}
\end{align}

We introduce the additional constraints to make the comparison of polytopes easier.
Although these constraints cut the polytopes of the LP relaxations they are not binding, i.e., any optimum solution satisfies the SEC2's anyway.

Then, the polytopes of the relaxed cut formulations are denoted by

\begin{align}
\polytope[sdc1]{SSTP} &= \left\{(x^0, z^{1\dots K}) \in [0,1]^{|E|} \times [0,1]^{|A|\cdot K} \right| \nonumber\\
&\hspace{60pt}\left.(x^0, z^{1\dots K}) \text{ satisfies \eqref{SSTP:sdcut1:semi:directed:cuts}, \eqref{sstp:sdcut1:additional:sec2}}  \right\} \nonumber\\
\polytope[sdc2]{SSTP} &= \left\{(x^0, y^{1\dots K}) \in [0,1]^{|E|} \times [0,1]^{|A|\cdot K} \right| \nonumber\\
&\hspace{60pt}\left.(x^0, y^{1\dots K}) \text{ satisfies \eqref{SSTP:sdcut2:directed:cuts}, \eqref{SSTP:sdcut2:capacity:first:second:stage}, \eqref{sstp:sdcut2:additional:sec2}}  \right\} \nonumber
\end{align}

Again, we consider the projections onto the space of undirected edge variables $x^{0\dots K}$:
\begin{align}
\projection{x^{0\dots K}}{\polytope[sdc1]{SSTP}} &= \left\{x^{0\dots K} \right| 
\exists z^{1\dots K}\colon (x^0, z^{1\dots K}) \in \polytope[sdc1]{SSTP}, \nonumber\\
 &\hphantom{= \left\{x^{0\dots K} \right|} \left.\, x_e^k = z_{ij}^k + z_{ji}^k, \forall k\in\cK, \forall e=\{i,j\}\in E \right\} \nonumber\\
\projection{x^{0\dots K}}{\polytope[sdc2]{SSTP}} &= \left\{x^{0\dots K} \right| 
\exists y^{1\dots K}\colon (x^0, y^{1\dots K}) \in \polytope[sdc2]{SSTP}, \nonumber\\
 &\hphantom{= \left\{x^{0\dots K} \right| } \left.\, x_e^k = y_{ij}^k + y_{ji}^k - x_e^0, \forall k\in\cK, \forall e=\{i,j\}\in E \right\} \nonumber
\end{align}

\begin{figure}[tb]
\centering
\begin{tikzpicture}[scale=1.0]

\node[rectangle,minimum width=6cm, minimum height=1cm] (r0) at (11.5,10) {};
\node[rectangle,minimum width=2.5cm] (0) at (10,10) {$\model[uc]{SSTP}$};
\node[rectangle,minimum width=2.5cm] (1) at (11.5,12) {$\model[sdc1]{SSTP}$};
\node[rectangle,minimum width=2.5cm] (2) at (8,14) {$\model[sdc2]{SSTP}$};
\node[rectangle,minimum width=2.5cm] (3) at (11.5,14) {$(\text{SSTP}_{\text{sdc2}^*})$};

\node[rectangle,minimum width=10cm, minimum height=1cm] (r1) at (11.5,14) {};

\node[rectangle,minimum width=2.5cm] (4) at (13,10) {$\model[uf]{SSTP}$};
\node[rectangle,minimum width=2.5cm] (5) at (15,14) {$\model[sdf]{SSTP}$};

\draw[arc] (r0) -- (1);
\draw[arc] (1) -- (r1);

\draw[dashedEdge] (2) -- (3);

\draw[dashedEdge] (0) -- (4);
\draw[dashedEdge] (3) -- (5);

\end{tikzpicture}
\caption{
Hierarchy of undirected and semi-directed formulations for the SSTP.
The dashed line and the additional clusters specify that formulations are equivalent.
An arrow indicates that the target cluster contains stronger formulations than the formulations in the source cluster.
}
\label{figure:sstp:semi:directed:formulations:strength}
\end{figure}
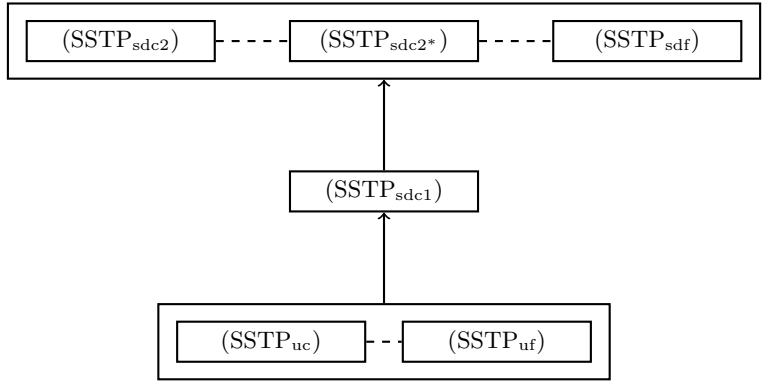

We start by comparing the undirected and the first semi-directed cut formulation.
Not surprising, the additional directed parts of the formulation make it stronger.  

\begin{theorem}
\label{lemma:sstp:u:vs:sd1}
$\polytope[uc]{SSTP} 
\supsetneq 
\projection{x^{0\dots K}}{\polytope[sdc1]{SSTP}}$, i.e., 
the semi-directed cut-based formulation $\model[sdc1]{SSTP}$ is stronger than the undirected 
formulation $\model[uc]{SSTP}$.
\end{theorem}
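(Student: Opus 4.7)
The plan is to establish the inclusion and then the strictness separately.

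For the inclusion $\projection{x^{0\dots K}}{\polytope[sdc1]{SSTP}} \subseteq \polytope[uc]{SSTP}$, I would start from any $(x^0, z^{1\dots K}) \in \polytope[sdc1]{SSTP}$ and its projection given by $x_e^k = z_{ij}^k + z_{ji}^k$ for every $e=\{i,j\}$ and every $k\in\cK$. Fix a scenario $k$ and a vertex set $S$ with $\emptyset\not=T^k\cap S\not=T^k$. Because the latter condition is symmetric under complementation, exactly one of $S$ or $V\setminus S$ contains $r^k$ in its complement and still meets $T^k_r$; call that set $S'$. Then $S'$ is a valid index for the semi-directed cut constraint \eqref{SSTP:sdcut1:semi:directed:cuts}, so $(x^0+z^k)(\delta^-(S'))\ge 1$. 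Since $\delta(S)=\delta(S')$ and $x^k(\delta(S)) = z^k(\delta^-(S')) + z^k(\delta^+(S')) \ge z^k(\delta^-(S'))$, we obtain $(x^0+x^k)(\delta(S))\ge 1$, which is exactly \eqref{SSTP:ucut:undirected:cuts}.

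For strictness, I would exhibit a point in $\polytope[uc]{SSTP}$ whose projection-preimage under $\model[sdc1]{SSTP}$ is empty. The natural candidate is the classical undirected STP bad example transplanted to a single-scenario SSTP instance: take $G$ the triangle on $\{1,2,3\}$, one scenario $k=1$ with $T^1=\{1,2,3\}$ and designated root $r^1=1$, all first-stage edge values $x_e^0=0$, and $x_e^1=\tfrac{1}{2}$ for each of the three edges. Every nonempty proper terminal cut has two edges in $\delta(S)$, so $x^1(\delta(S))=1$ and \eqref{SSTP:ucut:undirected:cuts} holds. However, any candidate orientation $z^1$ with $z_{ij}^1+z_{ji}^1=\tfrac{1}{2}$ for each edge must satisfy the three semi-directed cuts for $S\in\{\{2\},\{3\},\{2,3\}\}$, whose sum reads $2z_{12}^1+2z_{13}^1+z_{23}^1+z_{32}^1\ge 3$. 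But the right-hand side is bounded above by $2\cdot\tfrac12+2\cdot\tfrac12+\tfrac12=\tfrac52<3$, a contradiction. Hence this point lies in $\polytope[uc]{SSTP}$ but not in the projection.

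The main step is the orientation argument in the inclusion direction: one has to notice that the undirected cut $S$ from \eqref{SSTP:ucut:undirected:cuts} is indexed more liberally (only $\emptyset\not=T^k\cap S\not=T^k$) than the semi-directed cut (root $r^k\notin S$ and $T^k_r\cap S\not=\emptyset$), so a careful choice of $S$ versus $V\setminus S$ is needed. Everything else is either direct summation or standard. The strictness example is immediate once one recognizes the triangle instance as the classical tight example for the integrality gap between undirected and directed cut formulations in the deterministic STP.
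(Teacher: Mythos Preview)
Your inclusion argument is essentially identical to the paper's: both complement $S$ if necessary so that $r^k\notin S'$ and then dominate the undirected cut by the semi-directed one. One small omission: you do not verify $x_e^k\in[0,1]$; the upper bound $x_e^k=z_{ij}^k+z_{ji}^k\le 1$ relies on the SEC2 constraints \eqref{sstp:sdcut1:additional:sec2} that were explicitly added to $\polytope[sdc1]{SSTP}$ for exactly this reason, so you should invoke them.

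For strictness, the paper uses the same triangle instance but argues via optimal LP \emph{values} (choosing costs so that the undirected LP attains $1.5$ while the semi-directed LP attains $2$). Your argument is more direct for the polytope statement: you exhibit the point $x^0=0$, $x_e^1=\tfrac12$ and derive a contradiction by summing the three semi-directed cut inequalities. Both are correct; your route avoids introducing objective coefficients and speaks directly to set containment, while the paper's route has the advantage of simultaneously exhibiting an LP integrality-gap instance.
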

\begin{proof}
Let $(\tilde x^0, \tilde z^{1\dots K}) \in \polytope[sdc1]{SSTP}$ and 
set $\hat x^0 := \tilde x^0, \hat x_e^k := \tilde z_{ij}^k + \tilde z_{ji}^k, \forall k\in\cK, \forall e=\{i,j\} \in E$. 
We obtain a solution $\hat x^{0\dots K}$ for $\model[uc]{SSTP}$; its validity is discussed in the following.

Bounds of the first-stage variables $\hat x^0$ are obviously satisfied.
Moreover, it clearly holds $\hat x_e^k \ge 0$ and due to constraints \eqref{sstp:sdcut1:additional:sec2}: $\hat x_e^k = \tilde z_{ij}^k + \tilde z_{ji}^k \le 1$.
Hence,  $\hat x^{0\dots K} \in [0, 1]^{|E|\cdot (K+1)}$.

We now show that the undirected cuts \eqref{SSTP:ucut:undirected:cuts} are also satisfied by $\hat x^{0\dots K}$.
Let $S\subseteq V$ represent a feasible cut set in scenario $k\in\cK$, i.e., $\emptyset \not=S\cap T^k\not= T^k$. 
Since cuts in $\model[sdc1]{SSTP}$ are semi-directed and ingoing we assume w.l.o.g.\ that it holds $r^k\not\in S$.
Otherwise one can simply consider the complementary set $V\wo S$, 
since $\delta(S) = \delta(V\wo S)$ and then, 
it holds $r^k\not\in (V\wo S)$.
\begin{align}
(\hat x^0 + \hat x^k)(\delta(S)) &=\sum_{e\in\delta(S)} \hat x_e^0 + \hat x_e^k \nonumber\\
&=\sum_{e\in\delta(S)} \tilde x_e^0 + \sum_{\{i,j\}\in\delta(S)} \tilde z_{ij}^k + \tilde z_{ji}^k \nonumber\\
&\ge\tilde x^0(\delta(S)) + \tilde z^k(\delta^-(S)) \ge 1\nonumber
\end{align}

The last inequality holds since $(\tilde x^0, \tilde z^{1\dots K})$ satisfies constraint \eqref{SSTP:sdcut1:semi:directed:cuts} for cut set $S$. 

Intuitively, the strict inequality of the formulations results from the directed arcs in the scenarios and the strength of the directed cut formulation for the deterministic STP.
Figure \ref{figure:sstp:u:vs:sd1} gives a small example with this property where 
everything is purchased in the second stage and the relaxed semi-directed model gives a better lower bound.
\qed \end{proof}

\begin{figure}[tb]
\centering
\begin{tikzpicture}[scale=1.0]

\node[rectangle] (1) at (11,12) {$1$};
\node[rectangle] (2) at (9,10) {$2$};
\node[rectangle] (3) at (13,10) {$3$};

\draw[basicEdge] (1) -- node[right] {$e_1$} node[left] {10/1} (2);
\draw[basicEdge] (1) -- node[left] {$e_2$} node[right] {10/1} (3);
\draw[basicEdge] (2) -- node[above] {$e_3$} node[below] {10/1} (3);

\end{tikzpicture}
\caption{
Example where the LP relaxation of $(\text{SSTP}_{\text{sdc1}})$ gives a better lower bound than
$(\text{SSTP}_{\text{uc}})$.
There is one scenario and all vertices are terminals. 
Edge costs for the first stage are all 10 and for the scenario 1.  
Both formulations purchase edges only in the second stage. 
The optimum solution to the undirected formulation has cost $1.5$ with $x_e^1 = 0.5, \forall e\in E$. 
Since there is no valid orientation using $0.5$ of each edge the semi-directed formulation selects two arcs in the second stage to connect the two remaining vertices to a root node leading to overall cost 2, e.g., for root node 1 set $z_{(1,2)}^1 = z_{(1,3)}^1 = 1$.
}
\label{figure:sstp:u:vs:sd1}
\end{figure}
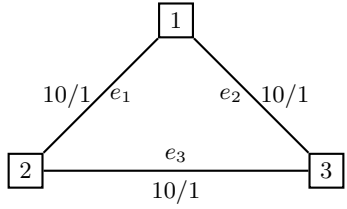

The following theorem shows that formulation $\model[sdc2]{SSTP}$ is stronger than formulation $\model[sdc1]{SSTP}$.

\begin{theorem}
\label{lemma:sstp:sd1:vs:sd2}
$\projection{x^{0\dots K}}{\polytope[sdc1]{SSTP}} 
\supsetneq 
\projection{x^{0\dots K}}{\polytope[sdc2]{SSTP}}$.
\end{theorem}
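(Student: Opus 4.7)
The claim has two parts. For the inclusion $\projection{x^{0\dots K}}{\polytope[sdc1]{SSTP}} \supseteq \projection{x^{0\dots K}}{\polytope[sdc2]{SSTP}}$ my plan mirrors the proof of Theorem \ref{lemma:sstp:u:vs:sd1}: start from an arbitrary $(\tilde x^0, \tilde y^{1\dots K}) \in \polytope[sdc2]{SSTP}$ and construct $\tilde z^{1\dots K}$ with $\tilde z_{ij}^k + \tilde z_{ji}^k = \tilde y_{ij}^k + \tilde y_{ji}^k - \tilde x_e^0$ for every scenario $k$ and edge $e=\{i,j\}$, so that $(\tilde x^0,\tilde z^{1\dots K})\in\polytope[sdc1]{SSTP}$ yields the same undirected projection.

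The construction exploits the capacity constraints \eqref{SSTP:sdcut2:capacity:first:second:stage}: since $\tilde y_{ij}^k + \tilde y_{ji}^k \ge \tilde x_e^0$, I may split $\tilde x_e^0 = \delta_{ij}^k + \delta_{ji}^k$ with $0\le\delta_{ij}^k\le\tilde y_{ij}^k$ and $0\le\delta_{ji}^k\le\tilde y_{ji}^k$, and set $\tilde z_{ij}^k := \tilde y_{ij}^k - \delta_{ij}^k \ge 0$. Non-negativity and the projection identity are then immediate; SEC2 \eqref{sstp:sdcut1:additional:sec2} holds because $\tilde z_{ij}^k+\tilde z_{ji}^k = \tilde y_{ij}^k + \tilde y_{ji}^k - \tilde x_e^0 \le 1$. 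For any valid cut $S\subseteq V_r^k$ with $S\cap T_r^k\neq\emptyset$, assuming w.l.o.g.\ $r^k\notin S$, each $e\in\delta(S)$ contributes exactly one arc to $\delta^-(S)$, and since $\delta_{ij}^k\le\tilde x_e^0$ for that arc I get
\[
(\tilde x^0 + \tilde z^k)(\delta^-(S)) = \tilde x^0(\delta(S)) + \sum_{(i,j)\in\delta^-(S)}(\tilde y_{ij}^k - \delta_{ij}^k) \ge \tilde y^k(\delta^-(S)) \ge 1,
\]
where the last inequality is \eqref{SSTP:sdcut2:directed:cuts}. This establishes the inclusion.

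For strictness I would exhibit a single projection in $\projection{x^{0\dots K}}{\polytope[sdc1]{SSTP}}\setminus\projection{x^{0\dots K}}{\polytope[sdc2]{SSTP}}$. Take a triangle on $V=\{r^1,a,b\}$, one scenario with $T^1=V$, and set $\tilde x_e^0 = 1/2$ on each edge together with $\tilde z^1\equiv 0$. Every semi-directed cut has $|\delta(S)|=2$, so $\tilde x^0(\delta(S)) = 1$ and the point lies in $\polytope[sdc1]{SSTP}$, with undirected projection $\hat x_e^0 = 1/2$, $\hat x_e^1 = 0$. Realising this same projection in sdc2 would force $\tilde y_{ij}^1+\tilde y_{ji}^1 = 1/2$ on every edge, but summing \eqref{SSTP:sdcut2:directed:cuts} over the cuts $S\in\{\{a\},\{b\},\{a,b\}\}$ demands $2\tilde y_{r^1a}^1 + 2\tilde y_{r^1b}^1 + \tilde y_{ab}^1 + \tilde y_{ba}^1 \ge 3$, whose maximum under the above equalities is $2\cdot\tfrac12 + 2\cdot\tfrac12 + \tfrac12 = \tfrac52 < 3$, a contradiction. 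The main obstacle is really just pinning down the directional split $\delta_{ij}^k$: once it is in hand the cut verification is routine, and the counterexample must be chosen so that the fractional first-stage edges alone already cover the semi-directed cuts of sdc1 while leaving no room for a consistent orientation satisfying the purely directed sdc2 cuts.
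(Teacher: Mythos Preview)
Your proposal is correct and follows essentially the same approach as the paper: for the inclusion you subtract a portion of $\tilde x_e^0$ from each direction of $\tilde y^k$ (the paper fixes this portion explicitly via $\alpha_{ij}^k=\tilde y_{ij}^k/(\tilde y_{ij}^k+\tilde y_{ji}^k)$, which is a concrete instance of your $\delta_{ij}^k$), and the cut verification is identical. For strictness the paper uses the same triangle with $\tilde x_e^0=\tfrac12$ but argues via LP optimal values under specific edge costs rather than your direct polytope-membership contradiction; both routes establish the strict containment on the same example.
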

\begin{proof}
Let $(\tilde x^0, \tilde y^{1\dots K}) \in \polytope[sdc2]{SSTP}$ 
and set
$\hat x_e^0 := \tilde x_e^0, \forall e\in E$,   
and $\hat x_e^k := \tilde y_{ij}^k + \tilde y_{ji}^k - \tilde x_e^0, \forall k\in \cK, \forall e=\{i,j\}\in E$.
We argue  that $\hat x^{0\dots K} \in \projectionS{x^{0\dots K}}{\polytope[sdc1]{SSTP}}$ 
by showing that there exists a variable assignment $\hat z^{1\dots K} \in [0,1]^{K\cdot |A|}$ such that 
$(\hat x^0, \hat z^{1\dots K}) \in \polytope[sdc1]{SSTP}$.

This solution is obtained by transforming $(\tilde x^0, \tilde y^{1\dots K})$ into a feasible $\model[sdc1]{SSTP}$-solution.
Thereby, the parameter $\alpha_{ij}^k \in [0,1], \forall k \in\cK, \forall (i,j)\in A$, is used:
\[
\alpha_{ij}^k := 
\begin{cases}
\frac{\tilde y_{ij}^k}{\tilde y_{ij}^k + \tilde y_{ji}^k} & \text{if}\ \tilde y_{ij}^k + \tilde y_{ji}^k > 0 \\
0 & \text{otherwise}.
\end{cases}
\]

This parameter allows us to split up the first-stage values among the two corresponding directed arcs, independent for each scenario.
With $\alpha$ at hand the directed arc variables are set to $\hat z_{ij}^k := \tilde y_{ij}^k - \alpha_{ij}^k \tilde x_{e}^0, \forall k \in\cK, \forall (i,j)\in A$, with $e =\{i,j\} \in E$.
 
First we show that this is a valid projection.
Notice that $\forall e=\{i,j\}\in E, \forall k\in\cK$: $\alpha_{ij}^k + \alpha_{ji}^k \in\{0,1\}$; if $\tilde y_{ij}^k + \tilde y_{ji}^k > 0$ this value is 1 and 0 otherwise.
Now, consider edge $e = \{i,j\}\in E$ in scenario $k\in\cK$ with $\tilde y_{ij}^k + \tilde y_{ji}^k > 0$. 
Then, $\hat z_{ij}^k + \hat z_{ji}^k =  \tilde y_{ij}^k - \alpha_{ij}^k \tilde x_e^0 + \tilde y_{ji}^k - \alpha_{ji}^k \tilde x_e^0 = \tilde y_{ij}^k + \tilde y_{ji}^k - \tilde x_e^0$.
In case $\alpha_{ij}^k = \alpha_{ji}^k = 0$, due to $\tilde y_{ij}^k + \tilde y_{ji}^k = 0$ and constraints \eqref{SSTP:sdcut2:capacity:first:second:stage}, i.e., 
$y_{ij}^k + y_{ji}^k \ge x_e^0$,  
it follows $\tilde x_e^0 = 0$.
Hence, it always holds $\hat z_{ij}^k + \hat z_{ji}^k = \tilde y_{ij}^k + \tilde y_{ji}^k - \tilde x_e^0, \forall k\in\cK, \forall e=\{i,j\}\in E$.

Now we are able to prove $\hat x^{0\dots K} \in \projectionS{x^{0\dots K}}{\polytope[sdc1]{SSTP}}$.
Due to the preceding discussion it is clear that the subtour elimination constraints \eqref{sstp:sdcut1:additional:sec2} are satisfied.
Moreover, it obviously holds $\hat x_e^0 \in [0,1], \forall e\in E$. 

Next, we consider the bounds for 
the directed arc variables $\hat z_{ij}^k, \forall k\in\cK, \forall (i,j)\in A$. 
$\hat z_{ij}^k \le 1$ holds since $\hat z_{ij}^k \le \tilde y_{ij}^k \le 1$. 
Non-negativity can be seen by considering two cases. 
(i) If $\alpha_{ij}^k > 0$:
\[
\hat z_{ij}^k 
= \tilde y_{ij}^k - \alpha_{ij}^k \tilde x_e^0
= \tilde y^k_{ij} - \tilde x_e^0 \frac{\tilde y^k_{ij}}{\tilde y^k_{ij} + \tilde y^k_{ji}} 
= \tilde y^k_{ij} \left( 1- \overbrace{\frac{\tilde x_e^0}{\tilde y^k_{ij} + \tilde y^k_{ji}} }^{\le1}\right) 
\ge 0.
\]

Inequality $\frac{\tilde x_e^0}{\tilde y^k_{ij} + \tilde y^k_{ji}} \le1$ is true due to capacity constraints \eqref{SSTP:sdcut2:capacity:first:second:stage}.
(ii) If $\alpha_{ij}^k = 0$ the non-negativity follows directly since $\hat z^k_{ij} = \tilde y^k_{ij} \ge 0$. 

It remains to show that
a valid cut $S\subseteq V_r$ in scenario $k\in\cK$ is satisfied by $(\hat x^0, \hat z^{1\dots K})$:
\begin{align}
(\hat x^0 + \hat  z^k)(\delta^-(S)) = \sum_{(i,j)\in \delta^-(S)} \hat x_{\{i,j\}}^0 + \hat z_{ij}^k &= \sum_{(i,j)\in \delta^-(S)} \tilde x_{\{i,j\}}^0 + \tilde y_{ij}^k - \alpha_{ij}^k \tilde x_{\{i,j\}}^0 \nonumber\\
&= \sum_{(i,j)\in \delta^-(S)} (1-\alpha_{ij}^k) \tilde x_{\{i,j\}}^0 + \tilde y_{ij}^k \nonumber\\
&\ge \sum_{(i,j)\in \delta^-(S)} \tilde y_{ij}^k \ge 1 \nonumber
\end{align}

The last inequality is true due to the validity of solution $\tilde y^k$ for scenario $k$ and constraints \eqref{SSTP:sdcut2:directed:cuts}.
This completes the ``$\supseteq$''-part of the proof.

An example showing the strict inequality can be constructed by exploiting the different \emph{meaning} of the first-stage variables.
In formulation $\model[sdc1]{SSTP}$ a first-stage edge $e=\{i,j\}$ contributes its value to cuts in both directions, i.e., $\delta^-(S)$ and $\delta^+(S)$.
Contrarily, a feasible solution for formulation $\model[sdc2]{SSTP}$ has to find an orientation for this edge and distribute its value to the related arcs.
In a sloppy way, the same edge has a lesser value in the second semi-directed formulation.

Hence, the same example from Figure \ref{figure:sstp:u:vs:sd1} can be utilized to show the strict inequality; one simply has to set edge costs to 1 for all first-stage and 10 for  the scenario edges, respectively.
There is still one scenario with all three vertices being terminals.
Then, formulation $\model[sdc1]{SSTP}$ selects all three edges at $0.5$ in the first stage satisfying all cuts in the scenario.
On the other hand, this solution is not valid for $\model[sdc2]{SSTP}$ and there is none with overall cost $1.5$.
\qed \end{proof}

To complete the hierarchy of SSTP formulations given in Figure \ref{figure:sstp:semi:directed:formulations:strength} it remains to show the equivalence of the semi-directed flow and cut-based formulations.
To give the formal proof we denote the polytope of the relaxed flow formulation and the projection onto the same variable space as follows.
\begin{align}
\polytope[sdf]{SSTP} &= \left\{(x^0, y^{1\dots K}, f) \in [0,1]^{|E|} \times [0,1]^{|A|\cdot K} \times [0,1]^{|A|\cdot t_r^*} \right| \nonumber\\
&\hphantom{= \left\{x^{0\dots K} \right|\ } \left.\, (x^0, y^{1\dots K}, f) \text{ satisfies \eqref{sstp:uflow:flow:conservation}, \eqref{sstp:sdflow:capacity}, \eqref{sstp:sdflow:capacity:first:second:stage}}  \right\} \nonumber\\
\projection{(x^0, y^{1\dots K})}{\polytope[sdf]{SSTP}} &= \left\{(x^0, y^{1\dots K})\ \middle|\,
\exists f\colon (x^0, y^{1\dots K}, f) \in \polytope[sdf]{SSTP}\right\} \nonumber
\end{align}

The stronger semi-directed cut and flow formulations are equivalent. 
This result is mainly a consequence of the relationship of the deterministic STP formulations.

\begin{lemm}
$\polytope[sdc2]{SSTP} 
= 
\projection{(x^0, y^{1\dots K})}{\polytope[sdf]{SSTP}}$.
\end{lemm}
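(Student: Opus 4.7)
The intended equivalence is an instance of the LP max-flow/min-cut theorem applied scenario by scenario: both polytopes share the bounds on $x^0, y^{1\dots K}$ and the linking capacity constraints \eqref{SSTP:sdcut2:capacity:first:second:stage} (which coincide with \eqref{sstp:sdflow:capacity:first:second:stage}), and they differ only in how they express $r^k$-to-terminal connectivity in each scenario---via directed cuts on $y^k$ in $\model[sdc2]{SSTP}$, versus via an explicit multi-commodity flow $f$ on the capacities $y^k$ in $\model[sdf]{SSTP}$. So I plan to handle the first-stage piece trivially and argue the scenario pieces separately for every $k\in\cK$.

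\textbf{Inclusion ``$\supseteq$''.} Take $(x^0, y^{1\dots K}, f) \in \polytope[sdf]{SSTP}$. The linking and box constraints transfer verbatim. For a directed cut $\delta^-(S)$ with $S\subseteq V_r^k$ and $S\cap T_r^k \neq \emptyset$, pick any $t \in S\cap T_r^k$ and sum the flow conservation equations \eqref{sstp:uflow:flow:conservation} for $f^{k,t}$ over all $i\in S$: the internal terms telescope, the right-hand side contributes $+1$ at $t$ and nothing at other vertices (since $r^k \notin S$), so $f^{k,t}(\delta^-(S)) - f^{k,t}(\delta^+(S)) = 1$, hence $f^{k,t}(\delta^-(S)) \ge 1$. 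Combining with \eqref{sstp:sdflow:capacity} arc-by-arc gives $y^k(\delta^-(S)) \ge f^{k,t}(\delta^-(S)) \ge 1$, which is exactly \eqref{SSTP:sdcut2:directed:cuts}.

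\textbf{Inclusion ``$\subseteq$''.} Take $(x^0, y^{1\dots K}) \in \polytope[sdc2]{SSTP}$; I construct the flows $f^{k,t}$ independently for every pair $(k,t)$ with $k\in\cK$ and $t\in T_r^k$. For each such pair, consider the directed capacitated network $\bar G = (V, A)$ with arc capacities $y^k_{ij}$, source $r^k$, and sink $t$. By constraint \eqref{SSTP:sdcut2:directed:cuts} every $r^k$-$t$ cut $\delta^-(S)$ (with $t\in S,\,r^k\notin S$) has capacity at least $1$, so by the LP version of max-flow/min-cut there exists a fractional flow $f^{k,t}$ of value exactly one from $r^k$ to $t$ with $0 \le f^{k,t}_{ij} \le y^k_{ij}$ for every arc; this is precisely \eqref{sstp:uflow:flow:conservation} and \eqref{sstp:sdflow:capacity}. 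Stacking these flows across all $k$ and $t\in T_r^k$ yields a single vector $f$ with $(x^0, y^{1\dots K}, f) \in \polytope[sdf]{SSTP}$.

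\textbf{Expected obstacle.} The only non-routine ingredient is the LP max-flow/min-cut step used in the $\subseteq$ direction; everything else is a direct rewrite, telescoping of flow conservation, and componentwise comparison. A secondary book-keeping point is the SEC2 inequality \eqref{sstp:sdcut2:additional:sec2} that appears in $\polytope[sdc2]{SSTP}$ but not explicitly in $\polytope[sdf]{SSTP}$; since it is a valid inequality that is not binding at optimum (as noted in the preceding discussion), I would treat it as tacitly included in both polytopes for the purpose of this comparison, or equivalently observe that it can be appended to the flow formulation without changing its projection.
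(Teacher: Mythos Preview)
Your argument is correct and takes the same approach as the paper, which simply appeals to the known equivalence of the directed cut and flow formulations for the deterministic STP applied scenario by scenario; you merely spell out both inclusions explicitly. Your flagging of the SEC2 book-keeping discrepancy is apt---the paper's one-paragraph proof glosses over it---and your proposed resolution (treat \eqref{sstp:sdcut2:additional:sec2} as tacitly appended to both polytopes, since it is non-binding) is the intended reading.
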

\begin{proof}
Restricting the models to one particular scenario, i.e., for one $k\in\cK$: $y^k$ or $(y^k, f^k)$, respectively, results in the related   
cut- and flow-based formulations for the deterministic  STP.
Since the formulations for the deterministic STP are equivalent and the remaining parts of the stochastic models are identical the lemma follows.
\qed
\end{proof}

\subsection{Directed formulations for the rSSTP}
\label{sstp:ip:formulations:strength:directed}

To make the comparison of the polytopes easier we add the following constraints to the directed cut formulations:
$\model[dc1]{rSSTP}$ is expanded by both constraints and 
$\model[dc2]{rSSTP}$ only by the second type of constraints \eqref{rsstp:dcuts:z0S:leq1}: 
\begin{align}
z_{ij}^0 + z_{ij}^k&\le 1\hspace{25pt} \forall k\in\cK, \forall (i,j)\in A \label{rsstp:dcut1:sec2:ij:both:stages} \\
z^0(\delta^-(v)) &\le 1\hspace{25pt} \forall v\in V_r \label{rsstp:dcuts:z0S:leq1}
\end{align}

As for the added SEC2's in the semi-directed formulations 
\eqref{rsstp:dcut1:sec2:ij:both:stages} is obviously redundant, too, 
since the right-hand-side of the directed cuts is 1.
The same holds for \eqref{rsstp:dcuts:z0S:leq1}. 

The polytopes of the relaxed formulations are denoted as follows.
\begin{align}
\polytope[dc1]{rSSTP} &= \left\{z^{0\dots K} \in [0,1]^{|A|\cdot(K+1)} \midresize z^{0\dots K} \text{ satisfies \eqref{rSSTP:dcut1:cuts:first:stage}, 
\eqref{rSSTP:dcut1:scenario:cuts}, 
\eqref{rsstp:dcut1:sec2:ij:both:stages}, 
\eqref{rsstp:dcuts:z0S:leq1}}  \right\} \nonumber \\
\polytope[dc2]{rSSTP} &= \left\{(z^0, y^{1\dots K}) \in [0,1]^{|A|\cdot(K+1)} \midresize (z^0, y^{1\dots K}) \text{ satisfies \eqref{rSSTP:dcut2:cuts:first:stage}--\eqref{rSSTP:dcut2:capacity:first:second:stage}, 
\eqref{rsstp:dcuts:z0S:leq1}} \right\} \nonumber
\end{align}

We use a projection for the second formulation to compare both models:
\begin{align}
\projection{z^{0\ldots K}}{\polytope[dc2]{rSSTP}} = \left\{(z^0, z^{1\dots K}) \right|
&(z^0, y^{1\dots K}) \in \polytope[dc2]{rSSTP}, \nonumber\\
&\left.z_{ij}^k = y_{ij}^k - z_{ij}^0, \forall k\in\cK, \forall (i,j)\in A \right\} \nonumber
\end{align}

Both directed cut-based formulations are equivalent:

\begin{theorem}
$\polytope[dc1]{rSSTP} = 
\projectionS{z^{0\ldots K}}{\polytope[dc2]{rSSTP}}$.
\end{theorem}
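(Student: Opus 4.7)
The plan is to prove both inclusions by exhibiting an explicit affine change of variables, namely $y_{ij}^k = z_{ij}^k + z_{ij}^0$ in one direction and $z_{ij}^k = y_{ij}^k - z_{ij}^0$ in the other, and then to verify constraint by constraint that the image lies in the target polytope. Since the two formulations share the identical first-stage cuts \eqref{rSSTP:dcut1:cuts:first:stage}/\eqref{rSSTP:dcut2:cuts:first:stage} and the identical auxiliary inequality \eqref{rsstp:dcuts:z0S:leq1}, essentially all the work concerns the scenario cuts and the box constraints on the scenario variables.

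For the inclusion $\projection{z^{0\ldots K}}{\polytope[dc2]{rSSTP}} \subseteq \polytope[dc1]{rSSTP}$, I would take $(\tilde z^0, \tilde y^{1\ldots K})\in \polytope[dc2]{rSSTP}$ and set $\hat z^0 := \tilde z^0$, $\hat z_{ij}^k := \tilde y_{ij}^k - \tilde z_{ij}^0$. Non-negativity of $\hat z^k$ follows directly from capacity constraints \eqref{rSSTP:dcut2:capacity:first:second:stage}, and the upper bound $\hat z_{ij}^k \le 1$ holds because $\hat z_{ij}^k \le \tilde y_{ij}^k \le 1$. For each scenario cut $S$ one computes $(\hat z^0 + \hat z^k)(\delta^-(S)) = \tilde z^0(\delta^-(S)) + \tilde y^k(\delta^-(S)) - \tilde z^0(\delta^-(S)) = \tilde y^k(\delta^-(S)) \ge 1$ by \eqref{rSSTP:dcut2:scenario:cuts}. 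The added SEC2 \eqref{rsstp:dcut1:sec2:ij:both:stages} collapses to $\hat z_{ij}^0 + \hat z_{ij}^k = \tilde y_{ij}^k \le 1$, and \eqref{rsstp:dcuts:z0S:leq1} is preserved because $\hat z^0 = \tilde z^0$.

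For the reverse inclusion $\polytope[dc1]{rSSTP} \subseteq \projection{z^{0\ldots K}}{\polytope[dc2]{rSSTP}}$, I would take $\tilde z^{0\ldots K}\in \polytope[dc1]{rSSTP}$ and set $\hat z^0 := \tilde z^0$, $\hat y_{ij}^k := \tilde z_{ij}^k + \tilde z_{ij}^0$. Non-negativity of $\hat y^k$ is immediate; the upper bound $\hat y_{ij}^k \le 1$ is exactly the added constraint \eqref{rsstp:dcut1:sec2:ij:both:stages}. The capacity constraint \eqref{rSSTP:dcut2:capacity:first:second:stage} becomes $\tilde z_{ij}^k + \tilde z_{ij}^0 \ge \tilde z_{ij}^0$, which is trivial. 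The directed scenario cut \eqref{rSSTP:dcut2:scenario:cuts} gives $\hat y^k(\delta^-(S)) = \tilde z^0(\delta^-(S)) + \tilde z^k(\delta^-(S)) \ge 1$ by \eqref{rSSTP:dcut1:scenario:cuts}. The first-stage cut and \eqref{rsstp:dcuts:z0S:leq1} are untouched.

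No real obstacle is expected; the proof is purely mechanical once the change of variables is fixed. The only place where the added redundant inequality \eqref{rsstp:dcut1:sec2:ij:both:stages} is actually needed is the upper bound on $\hat y_{ij}^k$ in the second inclusion, and the added \eqref{rsstp:dcuts:z0S:leq1} is needed only so that both polytopes live inside the box $[0,1]^{|A|}$ for the $z^0$-coordinate; they are precisely the ``bookkeeping'' inequalities that make the two polytopes equal rather than merely projection-equivalent up to redundancies.
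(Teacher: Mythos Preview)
Your proposal is correct and follows essentially the same route as the paper: both use the affine substitution $y_{ij}^k = z_{ij}^0 + z_{ij}^k$ (and its inverse) and then verify the box constraints, the scenario cuts, the capacity constraints, and the added inequalities one by one, with \eqref{rsstp:dcut1:sec2:ij:both:stages} supplying the upper bound on $\hat y_{ij}^k$. The only quibble is your closing remark about \eqref{rsstp:dcuts:z0S:leq1}: that inequality bounds the in-degree sum $z^0(\delta^-(v))$, not the individual box coordinates, and in this proof it is carried over trivially because $\hat z^0 = \tilde z^0$ in both directions.
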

\begin{proof}
```$\subseteq$'':
Let $\tilde z^{0\dots K} \in\polytope[dc1]{rSSTP}$.
We show that $(\hat z^0, \hat y^{1\dots K}) \in\polytope[dc2]{rSSTP}$ with
$\hat z^0 := \tilde z^0, \hat y^k := \tilde z^k + \tilde z^0, \forall k\in\cK$.

First, we consider the variable bounds.
Since $\hat z^0 = \tilde z^0$ we have $\hat z^0 \in[0,1]^{|A|}$. 
Moreover, $\hat y^k$ is obviously non-negative and due to constraints \eqref{rsstp:dcut1:sec2:ij:both:stages} at most 1: 
$\hat y_{ij}^k = \tilde z_{ij}^k + \tilde z_{ij}^0 \le 1, \forall (i,j)\in A, \forall k\in\cK$.

Second, the directed cuts in the first stage, i.e., constraints \eqref{rSSTP:dcut2:cuts:first:stage}, and constraints \eqref{rsstp:dcuts:z0S:leq1}, are identical in both formulations and hence, they are satisfied.
This is also true for the capacity constraints \eqref{rSSTP:dcut2:capacity:first:second:stage} since
$\hat y_{ij}^k = \tilde z_{ij}^k + \tilde z_{ij}^0 \ge \hat z_{ij}^0, \forall (i,j)\in A, \forall k\in\cK$.

Third, consider a valid cut set $S\subseteq V_r$ in scenario $k\in\cK$.
Since $\tilde z^{0\dots K}$ is a valid solution for $\model[dc1]{rSSTP}$ it satisfies the directed cuts \eqref{rSSTP:dcut1:scenario:cuts}
and leads to $\hat y^k(\delta^-(S)) = (\tilde z^k + \tilde z^0)(\delta^-(S)) \ge 1$.
Hence, the directed cuts \eqref{rSSTP:dcut2:scenario:cuts} are satisfied by $\hat y^{1\dots K}$.

``$\supseteq$'':
The opposite direction is similar.
Let $(\tilde z^0, \tilde y^{1\dots K})\in \projectionS{z^{0\ldots K}}{\polytope[dc2]{rSSTP}}$.
We set 
$\hat z^0 := \tilde z^0, \hat z^k := \tilde y^k - \tilde z^0, \forall k\in\cK$, such that 
$\hat z^{0\dots K}\in \polytope[dc1]{rSSTP}$.

Again, directed cuts in the first stage are obviously satisfied and the variable bounds trivially hold for the first-stage variables.
For the second-stage variables we have $\hat z^k \le \tilde y^k \le \boldsymbol{1}$ and $\hat z^k = \tilde y^k - \tilde z^0 \ge \boldsymbol{0}, \forall k\in\cK$, due to constraints \eqref{rSSTP:dcut2:capacity:first:second:stage}.

The added constraints \eqref{rsstp:dcut1:sec2:ij:both:stages} are satisfied 
since 
$\hat z_{ij}^0 + \hat z_{ij}^k = \tilde z_{ij}^0 + \tilde y_{ij}^k - \tilde z_{ij}^0 = \tilde y_{ij}^k \le 1, \forall (i,j)\in A, \forall k\in\cK$,
and last but not least, a valid cut set $S\subseteq V_r$ in scenario $k\in\cK$ is satisfied 
since 
$(\hat z^0 + \hat z^k)(\delta^-(S)) = (\tilde z^0 + \tilde y^k - \tilde z^0)(\delta^-(S)) = \tilde y^k(\delta^-(S)) \ge 1$. 
\qed
\end{proof}

We close the discussion by comparing the directed flow formulation $\model[df]{rSSTP}$ 
to the second directed cut formulation $\model[dc2]{rSSTP}$. 
\begin{align}
\polytope[df]{rSSTP} = &\left\{(z^0, y^{1\dots K}, w^0, f) \in [0,1]^{|A|\cdot(K+1)} \times [0,1]^{|V_r|} \times [0,1]^{|A|(|V_r| + t_r^*)} \right| \nonumber\\
&\left. \hphantom{(z^0, y^{1\dots K}, w^0, f) \in [0,1]^{|A|\cdot(K+1)}}
(z^0, y^{1\dots K}, w^0, f) \text{ satisfies \eqref{rsstp:dflow:first:stage:capacity:arcs:flow}--\eqref{rsstp:dflow:flow:conservation}} \right\} \nonumber
\end{align}
\begin{align}
\projection{(z^0, y^{1\dots K})}{\polytope[df]{rSSTP}} &= \left\{(z^0, y^{1\dots K})\ \middle|\,
\exists (w^0, f)\colon (z^0, y^{1\dots K}, w^0, f) \in \polytope[df]{rSSTP} \right\} \nonumber
\end{align}

\begin{theorem}
$\polytope[dc2]{rSSTP} = \projection{(z^0, y^{1\dots K})}{\polytope[df]{rSSTP}}$
\end{theorem}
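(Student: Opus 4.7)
The plan is to prove the two inclusions separately, exploiting the fact that both sides decouple cleanly into a first-stage component, governed by the $w^0$/$z^0$ variables plus $|V_r|$ single-commodity flows $f^{0,v}$, and $K$ scenario components, each governed by $y^k$ plus $|T_r^k|$ single-commodity flows $f^{k,t}$. In both directions the core tool is the max-flow/min-cut theorem, applied commodity by commodity.

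For the direction $\projection{(z^0, y^{1\dots K})}{\polytope[df]{rSSTP}} \subseteq \polytope[dc2]{rSSTP}$, I would start from a point $(z^0, y^{1\dots K}, w^0, f) \in \polytope[df]{rSSTP}$ and check each defining inequality of $\polytope[dc2]{rSSTP}$ in turn. The capacity constraints \eqref{rSSTP:dcut2:capacity:first:second:stage} coincide verbatim with \eqref{rsstp:dflow:capacity:constraint:first:second:stage}. The scenario cuts \eqref{rSSTP:dcut2:scenario:cuts} follow by picking any $t \in S \cap T_r^k$; since $f^{k,t}$ carries one unit from $r$ to $t$ and $S \subseteq V_r$ separates them, $y^k(\delta^-(S)) \geq f^{k,t}(\delta^-(S)) \geq 1$ by flow conservation \eqref{rsstp:dflow:flow:conservation} and capacity \eqref{rsstp:dflow:capacity:constraint:scenario}. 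For the first-stage cuts \eqref{rSSTP:dcut2:cuts:first:stage}, pick any $v \in S$; the flow $f^{0,v}$ carries $w_v^0$ units from $r$ to $v$, so flow conservation \eqref{rsstp:dflow:first:stage:low:conservation} gives $f^{0,v}(\delta^-(S)) \geq w_v^0$, which combined with \eqref{rsstp:dflow:first:stage:capacity:arcs:flow} and \eqref{rsstp:dflow:first:stage:variables} yields $z^0(\delta^-(S)) \geq w_v^0 \geq z^0(\delta^-(v))$. The auxiliary inequality \eqref{rsstp:dcuts:z0S:leq1} follows from $z^0(\delta^-(v)) \leq w_v^0 \leq 1$.

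For the reverse inclusion, given $(z^0, y^{1\dots K}) \in \polytope[dc2]{rSSTP}$ I would construct witnesses $w^0$ and $f$. Set $w_v^0 := z^0(\delta^-(v))$, which lies in $[0,1]$ by \eqref{rsstp:dcuts:z0S:leq1} and satisfies \eqref{rsstp:dflow:first:stage:variables} with equality. For each $v \in V_r$, I need a feasible $r$-to-$v$ flow of value $w_v^0$ in the capacitated digraph $(\bar G, z^0)$; by max-flow/min-cut, such a flow exists iff every $r$-$v$ cut has capacity at least $w_v^0$. Any such cut is some $S \subseteq V_r$ with $v \in S$, and constraint \eqref{rSSTP:dcut2:cuts:first:stage} gives exactly $z^0(\delta^-(S)) \geq z^0(\delta^-(v)) = w_v^0$. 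Analogously, for each scenario $k$ and each $t \in T_r^k$ I need a unit $r$-to-$t$ flow under capacities $y^k$; the relevant cuts are sets $S \subseteq V_r$ with $t \in S$, and \eqref{rSSTP:dcut2:scenario:cuts} supplies $y^k(\delta^-(S)) \geq 1$. The flows can be chosen with each coordinate bounded by the corresponding capacity, hence automatically in $[0,1]$ since $z^0,y^k \leq \mathbf{1}$ and the total flow values $w_v^0, 1$ are at most $1$.

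The genuinely delicate point is the first-stage construction: one has to notice that defining $w_v^0 := z^0(\delta^-(v))$ is the right choice (as opposed to using an arbitrary value consistent with \eqref{rsstp:dflow:first:stage:variables}), because this makes the required cut lower bound $w_v^0$ match exactly the right-hand side $z^0(\delta^-(v))$ of \eqref{rSSTP:dcut2:cuts:first:stage}, enabling the max-flow/min-cut invocation. Once that identification is made, both directions reduce to applying max-flow/min-cut independently to $|V_r| + \sum_{k\in\cK}|T_r^k|$ single-commodity networks, and no interaction between commodities needs to be analyzed.
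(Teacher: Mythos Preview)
Your proposal is correct and follows essentially the same approach as the paper's proof: both directions are established via max-flow/min-cut applied commodity by commodity, with the key choice $w_v^0 := z^0(\delta^-(v))$ in the $\polytope[dc2]{rSSTP} \subseteq \projectionS{(z^0,y^{1\dots K})}{\polytope[df]{rSSTP}}$ direction so that \eqref{rSSTP:dcut2:cuts:first:stage} delivers exactly the required minimum-cut bound. Your write-up is in fact a bit more explicit than the paper's about why the flow bounds $f\in[0,1]$ are automatically met and why the first-stage cut argument goes through, but the underlying ideas are identical.
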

\begin{proof}
``$\subseteq$'':
Let $(\tilde z^0, \tilde y^{1\ldots K})\in \polytope[dc2]{rSSTP}$. 
We use $(\hat z^0, \hat y^{1\ldots K}) := (\tilde z^0, \tilde y^{1\ldots K})$ to construct a solution $(\hat z^0, \hat y^{1\dots K}, \hat w^0, \hat f) \in \polytope[df]{rSSTP}$. 
First, constraints \eqref{rsstp:dflow:capacity:constraint:first:second:stage} are contained in both models and hence satisfied for $(\hat z^0, \hat y^{1\ldots K})$. 
Second, since $w^0$ gives the connected vertices in the first stage we set $\hat w_v^0 := \tilde z(\delta^-(v)), \forall v\in V_r$; 
due to \eqref{rsstp:dcuts:z0S:leq1} we have $\tilde z(\delta^-(v))\in[0,1]$. 
Hence, bounds on $\hat w^0$ are satisfied and moreover, \eqref{rsstp:dflow:first:stage:variables} is satisfied with equality. 
Third, the remaining part of formulation $\model[df]{rSSTP}$ contains the construction of flow: 
we set the flow variables $\hat f$ such that in the first stage a flow with value $\hat w_v^0$ is send from the root to every vertex and in every scenario a flow of value 1 from the root to every terminal. 
The feasibility and correctness follows again from ``max flow = min cut''. 

``$\supseteq$'':
Let $(\tilde z^0, \tilde y^{1\dots K}, \tilde w^0, \tilde f) \in \polytope[df]{rSSTP}$. 
Again, set $(\hat z^0, \hat y^{1\ldots K}) := (\tilde z^0, \tilde y^{1\ldots K})$.  
First, \eqref{rsstp:dcuts:z0S:leq1} is satisfied for all vertices $v\in V_r$ since $\hat z^0(\delta^-(v)) \le \tilde w_v^0$ due to \eqref{rsstp:dflow:first:stage:variables}. 
Second, due to \eqref{rsstp:dflow:first:stage:capacity:arcs:flow}--\eqref{rsstp:dflow:first:stage:low:conservation} there is a flow with value $\tilde w^0_v$ in the first stage from the root to a vertex $v\in V_r$ with $\tilde w^0_v > 0$ and moreover, arcs used for routing flow are selected by $\tilde z^0$ through \eqref{rsstp:dflow:first:stage:capacity:arcs:flow}. 
Hence, again due to ``max flow = min cut'', the directed cuts \eqref{rSSTP:dcut2:cuts:first:stage} are satisfied for $\hat z^0$ for all $v\in V_r$. 
The same holds for the directed cuts in the scenarios \eqref{rSSTP:dcut2:scenario:cuts} and variables $\hat y^k, \forall k\in\cK$. 
Last but not least, \eqref{rSSTP:dcut2:capacity:first:second:stage} is satisfied since the constraints are contained in both models. 
\qed
\end{proof}

\bibliographystyle{plain}
\bibliography{../StochasticNetworkDesign}

\begin{thebibliography}{10}

\bibitem{BernPlassmann1989}
M.~Bern and P.~Plassmann.
\newblock The {S}teiner problem with edge lengths 1 and 2.
\newblock {\em Information Processing Letters}, 32:171--176, 1989.

\bibitem{BirgeLouveaux1997}
J.~R. Birge and F.~Louveaux.
\newblock {\em Introduction to Stochastic Programming}.
\newblock Springer, New York, second edition, 2011.

\bibitem{BoeklerZeyMutzel2012}
F.~B{\"o}kler, P.~Mutzel, and B.~Zey.
\newblock The stochastic {S}teiner tree problem on partial $k$-trees.
\newblock In {\em MEMICS 2012}. NOVPRESS Brno, 2012.

\bibitem{BomzeEtAl2010}
I.~M. Bomze, M.~Chimani, M.~J{\"u}nger, I.~Ljubi{\'c}, P.~Mutzel, and B.~Zey.
\newblock Solving two-stage stochastic {S}teiner tree problems by two-stage
  branch-and-cut.
\newblock In {\em International Symposium on Algorithms and Computation
  ({ISAAC})}, volume 6506 of {\em LNCS}, pages 427--439. Springer, 2010.

\bibitem{ByrkaEtAlJournal}
J.~Byrka, F.~Grandoni, T.~Rothvo{\ss}, and L.~Sanit{\`a}.
\newblock {S}teiner tree approximation via iterative randomized rounding.
\newblock {\em Journal of the {ACM}}, 60(1):1--33, 2013.

\bibitem{ChimaniMutzelZeySTP-FPT}
M.~Chimani, P.~Mutzel, and B.~Zey.
\newblock Improved {S}teiner tree algorithms for bounded treewidth.
\newblock {\em Journal of Discrete Algorithms}, pages 67--78, 2012.

\bibitem{ChopraRao1994}
S.~Chopra and M.~R. Rao.
\newblock The {S}teiner tree problem {I}: {F}ormulations, compositions and
  extension of facets.
\newblock {\em Mathematical Programming}, 64:209--229, 1994.

\bibitem{ChopraRao1994b}
S.~Chopra and M.~R. Rao.
\newblock The {S}teiner tree problem {II}: {P}roperties and classes of facets.
\newblock {\em Mathematical Programming}, 64:231--246, 1994.

\bibitem{ChopraTsai2001}
S.~Chopra and C.-Y. Tsai.
\newblock Polyhedral approaches for the {S}teiner tree problem on graphs.
\newblock In {\em Steiner Trees in Industry}, pages 175--202. Kluwer Academic
  Publishers, 2001.

\bibitem{DreyfusWagner1972}
S.~E. Dreyfus and R.~A. Wagner.
\newblock The {S}teiner problem in graphs.
\newblock {\em Networks}, 1:195--207, 1972.

\bibitem{GareyJohnson1977}
M.~R. Garey and D.~S. Johnson.
\newblock The rectilinear {S}teiner tree problem is {NP}-complete.
\newblock {\em {SIAM} Journal on Applied Mathematics}, 32(4):826--834, 1977.

\bibitem{GoemansMyung1993}
M.~X. Goemans and Y.-S. Myung.
\newblock A catalog of {S}teiner tree formulations.
\newblock {\em Networks}, 23(1):19--28, 1993.

\bibitem{GuptaEtAl-SSTP-NonUniformInflation}
A.~Gupta, M.~Hajiaghayi, and A.~Kumar.
\newblock Stochastic {S}teiner tree with non-uniform inflation.
\newblock In {\em {APPROX}/{RANDOM}}, LNCS, pages 134--148. Springer, 2007.

\bibitem{GuptaKumar2009StochasticSteinerForest}
A.~Gupta and A.~Kumar.
\newblock A constant-factor approximation for stochastic {S}teiner forest.
\newblock In {\em STOC}, pages 659--668. ACM, 2009.

\bibitem{GuptaPal2005StochasticSteinerTreeWithoutRoot}
A.~Gupta and M.~P\'{a}l.
\newblock Stochastic {S}teiner trees without a root.
\newblock In {\em Automata, Languages and Programming ({ICALP})}, LNCS, pages
  1051--1063. Springer, 2005.

\bibitem{GuptaEtAl2004BoostedSampling}
A.~Gupta, M.~P\'{a}l, R.~Ravi, and A.~Sinha.
\newblock Boosted sampling: {A}pproximation algorithms for stochastic
  optimization.
\newblock In {\em STOC}, pages 417--426. ACM, 2004.

\bibitem{GuptaEtAl2012SamplingApproximationCostSharing}
A.~Gupta, M.~P\'{a}l, R.~Ravi, and A.~Sinha.
\newblock Sampling and cost-sharing: Approximation algorithms for stochastic
  optimization problems.
\newblock {\em SIAM Jorunal on Computing}, 40(5):1361--1401, 2012.

\bibitem{GuptaEtAl2007}
A.~Gupta, R.~Ravi, and A.~Sinha.
\newblock {LP} rounding approximation algorithms for stochastic network design.
\newblock {\em Mathematics of Operations Research}, 32(2):345--364, 2007.

\bibitem{HeuristicSSTP-Dimacs2014}
P.~Hokama, M.~C. San~Felice, E.~C. Bracht, and F.~L. Usberti.
\newblock A heuristic approach for the stochastic {S}teiner tree problem.
\newblock In {\em 11th DIMACS Implementation Challenge}, 2014.

\bibitem{ImmorlicaEtAlApproxStochCombOpt}
N.~Immorlica, D.~Karger, M.~Minkoff, and V.~S. Mirrokni.
\newblock On the costs and benefits of procrastination: {A}pproximation
  algorithms for stochastic combinatorial optimization problems.
\newblock In {\em Symposium on discrete algorithms ({SODA})}, pages 691--700.
  {SIAM}, 2004.

\bibitem{KallWallaceBook}
P.~Kall and S.~W. Wallace.
\newblock {\em Stochastic Programming}.
\newblock John Wiley and Sons, first edition, 1994.

\bibitem{Karp1972}
R.~M. Karp.
\newblock Reducibility among combinatorial problems.
\newblock In {\em Complexity of Computer Computations}, The IBM Research
  Symposia Series, pages 85--103. Plenum Press, New York, 1972.

\bibitem{KochMartinSTP}
T.~Koch and A.~Martin.
\newblock Solving {S}teiner tree problems in graphs to optimality.
\newblock {\em Networks}, 32:207--232, 1998.

\bibitem{KurzMutzelZey2013}
D.~Kurz, P.~Mutzel, and B.~Zey.
\newblock Parameterized algorithms for stochastic {S}teiner tree problems.
\newblock In {\em MEMICS}, number 7721 in LNCS, pages 143--154. Springer, 2013.

\bibitem{LjubicMutzelZeySSNDPpaper}
I.~Ljubi{\'c}, P.~Mutzel, and B.~Zey.
\newblock Stochastic survivable network design problems.
\newblock In {\em International Network Optimization Conference}, Electronic
  Notes in Discrete Mathematics, pages 245--252, 2013.

\bibitem{LjubicMutzelZeySSNDParticle}
I.~Ljubi{\'c}, P.~Mutzel, and B.~Zey.
\newblock Stochastic survivable network design problems: {T}heory and practice.
\newblock {\em European Journal of Operational Research}, 256(2):333--348,
  2017.

\bibitem{PolzinDaneshmand2001}
T.~Polzin and S.~Daneshmand.
\newblock A comparison of {S}teiner tree relaxations.
\newblock {\em Discrete Applied Mathematics}, 112(1-3):241--261, 2001.

\bibitem{PolzinDaneshmand2009ApporachesSTP}
T.~Polzin and S.~Daneshmand.
\newblock Approaches to the {S}teiner problem in networks.
\newblock In {\em Algorithmics of Large and Complex Networks}, volume 5515 of
  {\em LNCS}, pages 81--103. Springer, 2009.

\bibitem{RaviSinha-ua-StochasticShortestPath}
R.~Ravi and A.~Sinha.
\newblock Hedging uncertainty: {A}pproximation algorithms for stochastic
  optimization problems.
\newblock {\em Mathematical Programming}, 108(1):97--114, 2006.

\bibitem{ShapiroEtAlBookStochasticProgramming}
A.~Shapiro, D.~Dentcheva, and A.~Ruszczy{\'n}ski.
\newblock {\em Lectures on Stochastic Programming: Modeling and Theory}.
\newblock MOS-SIAM Series on Optimization 9. {SIAM}, 2009.

\bibitem{ShmoysSwamy2006Approx2StageStochOpt}
D.~B. Shmoys and C.~Swamy.
\newblock Algorithms column: {A}pproximation algorithms for 2-stage stochastic
  optimization problems.
\newblock {\em SIGACT News}, 37(1):33--46, 2006.

\bibitem{WaldColbourn1983}
J.~A. Wald and C.~J. Colbourn.
\newblock {S}teiner trees, partial 2-trees and minimum {IFI} networks.
\newblock {\em Networks}, 13:159--167, 1983.

\end{thebibliography}

\end{document}